\tikzstyle{startstop} = [rectangle,rounded corners, minimum width=3cm,minimum height=1cm,text centered, draw=black,fill=red!30]
\tikzstyle{io} = [trapezium, trapezium left angle = 70,trapezium right angle=110,minimum width=3cm,minimum height=1cm,text centered,draw=black,fill=blue!30]
\tikzstyle{process} = [rectangle,minimum width=3cm,minimum height=1cm,text centered,text width =3cm,draw=black,fill=orange!30]
\tikzstyle{decision} = [diamond,minimum width=3cm,minimum height=1cm,shape aspect=3,inner sep = 0.4pt,text centered,draw=black,fill=green!30]
\tikzstyle{arrow} = [thick,->,>=stealth]
\tikzstyle{shadow}=[preaction={fill=black,opacity=.5,transform canvas={xshift=0.5mm,yshift=-0.5mm},shading=radial,shading angle=20},fill=red]
\tikzstyle{ellipse}=[draw, rectangle, minimum width=2.8em, rounded corners=6pt,line width=0.5pt]% minimum height=1.5em, fill=red!20 椭圆
\tikzstyle{pxsbx}=[trapezium, trapezium left angle=75, trapezium right angle=105, minimum width=3em, text centered, draw = black, fill=white,line width=0.5pt] %平行四边形
\tikzstyle{lingxing}=[draw,diamond,shape aspect=3,inner sep = 0.4pt,thick,font=\itshape,line width=0.5pt]%,minimum size=8mm 菱形
\newtheorem{remark}{Remark}
\def\beq{\begin{equation}}
\def\eeq{\end{equation}}
\newcommand{\bea}{\begin{eqnarray}}
\newcommand{\eea}{\end{eqnarray}}
\def\bi{\begin{itemize}}
\def\ei{\end{itemize}}
\def\ba{\begin{array}}
\def\ea{\end{array}}
\def\bfig{\begin{figure}}
\def\efig{\end{figure}}
\def\lie{\mathcal{L}}
\def\scri  {\mathscr{I}}
\newtheorem{theorem}{Theorem}[section]
\newtheorem{definition}{Definition}[section]
\newtheorem{lemma}[theorem]{Lemma}
\def\be{\begin{eqnarray}}
\def\ee{\end{eqnarray}}
\newcommand{\calr}{\mathcal R}
\newcommand{\sm}{\mathscr{M}}
\renewcommand{\o}{\omega}
\renewcommand{\O}{\Omega}
\newcommand{\lt}{\left}
\newcommand{\rt}{\right}
\newcommand{\im}{\mathrm{Im}}
\begin{abstract}

%\end{abstract}
\title{Null infinity as $SU(2)$ Chern–Simons theories and its quantization}
\author[1]{\ Hongwei Tan}
\author[1]{\ Kui xiao}  
\author[1]{\ Shoucheng Wang}
\affiliation[1]{School of Science, Hunan Institute of Technology, Hengyang 421002, China}
\emailAdd{honweitan@hnit.edu.cn}
\emailAdd{Corresponding author:xiaokui@hnit.edu.cn}
\emailAdd{Corresponding author:wscdaxia@qq.com}
\abstract{
This paper studies the quantization of the future null infinity ($\scri^+$) of an asymptotically flat spacetime.
Based on the observation by Ashtekar and Speziale that $\scri^+$ can be regarded as a weakly isolated horizon, we adopt the quantization framework developed for weakly horizon to quantize $\scri^+$.
We first show that the symplectic structure of $\scri^+$ is equivalent to the sum of the symplectic structures of two $SU(2)$ Chern–Simons theories with opposite levels.
Based on this observation, we apply Chern-Simons quantization approach to quantize $\scri^+$.
Finally, we compute the entropy of $\scri^+$ by counting the microstates,
showing that it is proportional to the area of $\tilde{\Delta}$, a spacelike cross-section of $\scri^+$.
Our result is consistent with the universal entropy formula in the framework of (weakly) isolated horizon.
}
\keywords{}
\begin{document}

\maketitle%\vspace{-7mm}

\section{Introduction}
%General relativity (GR), as a theory describing gravity and spacetime, has achieved remarkable success over the past few decades, in both theoretical and experimental aspects (see, e.g., \cite{wald2010general,will2006confrontation}).
%\However, despite its great achievements,  several fundamental puzzles remain unresolved, such as the singularity problem and the black hole (BH) information paradox \cite{penrose1965gravitational,hawking1965occurrence,hawking2023large,senovilla1998singularity,hawking1974black,hawking1975particle,mathur2009information,page1993information,page1994black}.
%Quantum gravity is expected to provide a framework to address these challenges \cite{kiefer2006quantum,rovelli2004quantum}.
Loop quantum gravity (LQG), a nonperturbative and background independent theory of quantum gravity, has drawn broad attention within the community in recent decades (see, e.g., \cite{han2007fundamental,thiemann2008modern,ashtekar2004background}).
When applied to cosmology, LQG replaces the classical big bang singularity with a bounce, thereby resolving the cosmological singularity problem \cite{ ashtekar2006quantum,ashtekar2011loop,bojowald2008loop,ashtekar2009loop}.
Similar techniques have also been applied to BHs, aiming to resolve the BH singularity \cite{husain2022quantum,zhang2023loop,ashtekar2005quantum,bohmer2007loop,chiou2008phenomenological,bodendorfer2021quantum}.
Furthermore, recent developments involving black-to-white hole transition scenario have offered profound insights into the BH information paradox \cite{han2024spin,han2023geometry,rignon2022black,bianchi2018white,olmedo2017black}.
In recent years, progress in loop quantum-modified black hole models has deepened our understanding of both black hole entropy and the effective dynamics of LQG \cite{lewandowski2023quantum,yang2023shadow,stashko2024quasinormal,zhang2023black,gong2024quasinormal,yang2025gravitational,liu2024gravitational,zi2025eccentric,nozari2025circular,tan2025black,tan2025massive,lin2024effective,shi2024higher}.
However, despite the significant achievements of LQG in exploring quantum aspects of gravity, the quantization of spacetimes with asymptotically flat boundary conditions has received relatively little attention within the LQG community. 
Since asymptotically flat spacetime plays significant role in gravitational system, addressing this gap may offer valuable insights into the fundamental structure of quantum gravity.

Asymptotically flat spacetimes play a crucial role in GR, particularly in the description of isolated systems. Intuitively, asymptotic flatness means that the spacetime curvature vanishes as the radial coordinate \( r \) approaches infinity. Based on this intuition, the metric of an asymptotically flat spacetime in the limit \( r \to \infty \) takes the following form (see, e.g., \cite{han2024symmetry, strominger2018lectures}):
\be
g_{\mu\nu}=\eta_{\nu\nu}+\frac{f_{\mu\nu}\lt(X^0,\vec{X}/r\rt)}{r}+O\lt(r^{-2}\rt),\label{asympflat}
\ee
with $X^\mu$ is the asymptotic Cartesian coordinate, $\eta_{\mu\nu}$ is the standard Minkowski metric, and $f_{\mu\nu}\lt(X^0,\vec{X}/r\rt)$ represents the metric on a 2-sphere.
The radial coordinate $r$ is given by $r=\sqrt{(X^1)^2+(X^2)^2+(X^3)^2}$.
The null infinity, serving as a null boundary of an asymptotically flat spacetime, plays a crucial role in modern theoretical physics (see eg., \cite{strominger2018lectures}).
In this paper, we will focus on the future null infinity, denoted by $\scri^+$, which provides a natural setting for investigating the radiative aspects of gravity and scattering problems \cite{bondi1960gravitational,ashtekar2024null,frauendiener2025fully,strominger2016gravitational,he2015bms}.
Unlike in Minkowski spacetime, the asymptotic symmetry group on $\scri^+$ is infinite dimensional and is known as the BMS (Bondi-Metzner-Sachs) group \cite{bondi1962gravitational, sachs1962gravitational}.
Recent developments of the flat space holography, where $\scri^+$ serves as the boundary of the construction, have attracted significant attention from the research community (see e.g.,\cite{donnay2023bridging}).

Penrose, Geroch and Ashtekar reformulated the definition of asymptotically flat spacetime with Penrose conformal completion construction \cite{geroch1977asymptotic,ashtekar1980null,wald2000general}.
In this approach, a physical spacetime $\left(\mathscr{M},g_{ab}\right)$ and an unphysical spacetime $\left(\tilde{\mathscr{M}},\tilde g_{ab}\right)$ are introduced,
where $\tilde{g}_{ab}=\O^2g_{ab}$ defines a conformal transformation of the physical metric, and $\O$ is known as the conformal factor.
Several requirements are introduced to $\O$ to properly define asymptotic flatness, see def. \ref{def:Asymptotic} for details.
Based on this framework, Ashtekar introduced a quantization of the gravitational radiative modes at $\scri^+$ \cite{ashtekar1987asymptotic,ashtekar2014geometry}.
Despite the significant progress in the study of $\scri^+$, a complete theory for the quantization of the spacetime background at $\scri^+$ remains elusive.
This paper aims to address this gap by quantizing $\scri^+$ using the Chern-Simons theory quantization strategy introduced in \cite{witten1989quantum}.
Our approach not only potentially deepens the understanding of the nature of $\scri^+$, but also offers potential insights into gravitational radiation and flat space holography.

Recent developments by Ashtekar and Speziale potentially lead to a strategy for quantizing $\scri^+$ \cite{ashtekar2024charge}. 
In these works, the authors demonstrate that $\scri^+$  can be identified as a weakly isolated horizon (WIH) within the so-called divergence-free conformal frame, defined by the condition $\tilde{\nabla}_a\tilde{n}^a\hat=0$, where $\hat=$ means the equation is evaluated on $\scri^+$.
Here, $\tilde{\nabla}_a$ is the covariant derivative compatible with the unphysical metric $\tilde{g}_{ab}$, and $\tilde{n}^a$ is the normal of $\scri^+$, defined via the conformal factor as $\tilde{n}_a:=\tilde{\nabla}_a\O|_{\scri^+}$.
Moreover, one can show that the imaginary part of the Weyl tensor $\tilde{\Psi}_2$ vanishes in this confomal frame.
The definition of $\tilde\Psi_2$ can be found in \eqref{eq:def_psi_2}.
This finding inspires us to apply the quantization strategy developed for WIHs to the case of $\scri^+$.

The quantization of the isolated horizons (IHs) has been investigated broadly in the literatures \cite{perez2017black,ashtekar2004isolated}.
In the spherically symmetric case, it has been shown that the classical degrees of freedom of an IH can be described by a $U(1)$ Chern-Simons theory  \cite{ashtekar1998quantum,ashtekar2000quantum}
However, as demonstrated in \cite{perez2017black}, this theory is incompatible with the Heisenberg's uncertainty principle at the quantum level.
This issue is resolved by enhancing this theory to an $SU(2)$ Chern-Simons theory \cite{engle2010black}.
Ref. \cite{perez2011static} explores the cases of static IHs, which includes horizons of arbitrary shape but without angular momentum.
The authors show that such IHs can be dynamically described by two $SU(2)$ Chern-Simons theories.
Henceforth, one can apply Chern-Simons quantization scheme introduced in \cite{witten1989quantum} to quantize this type of IHs.
Furthermore, Ref. \cite{frodden2014modelling} extends the investigations to IHs with angular momentum.

Indeed, the approach introduced in Ref. \cite{perez2011static} can be generalized to the case of WIH directly, as we will see later.
Therefore, we adopt this quantization scheme to quantize $\scri^+$, which is the main task of this paper.
We begin by briefly reviewing the results of \cite{ashtekar2024null}, which establish the equivalence between $\scri^+$ and a WIH under the divergence-free conformal frame.
This observation motivates us to apply the approach developed in  Ref. \cite{perez2011static} to the study of $\scri^+$.
In this work, we consider the vacuum action for the unphysical spacetime, i.e., with no matter fields coupled to gravity, as shown in eq. \eqref{eq:action}.
We demonstrate that this choice of action satisfies the definition of the asymptotically flat spacetime outlined in def. \ref{def:Asymptotic} by requiring $\tilde{\nabla}_a\tilde{n}^a$ falls off at least as fast as $\O^2$ While approaching to $\scri^+$, which strengthens the condition of the divergence-free conformal frame mentioned previously.
Furthermore, we show that the classical degrees of freedom can be dynamically described by a pair of $SU(2)$ Chern-Simons theories with opposite levels.
The cases with matter fields will be explored in our future works, as they may lead to nontrivial symplectic structures associated with gravitational radiation and BMS charges.

The result discussed above motivates us to quantize $\scri^+$ using Chern–Simons quantization scheme.
In this paper, we adopt the approach introduced in ref. \cite{witten1989quantum} to carry out the quantization of $\scri^+$.
Specifically, for a spacelike cross-section $\tilde{\Delta}$ of $\scri^+$, we demonstrate that the corresponding Hilbert space is given by the intertwiner space of the quantum deformation of $SU(2)$, known as $U_q(su(2))$.
This result naturally introduces a cutoff $|k|/2$ to the spin number $j$, with $k$ is the level of the Chern-Simons theory.
Finally, following the approach in \cite{ghosh2006counting}, we show that the entropy associated with $\scri^+$ is proportional to the area of $\tilde{\Delta}$.

The present work is the first step toward developing the quantization theory of $\scri^+$, where the gravitational radiation, especially the BMS charges, have not been taken into account.
To properly consider gravitational radiation and BMS charges, a more general action probably be needed for the unphysical spacetime in the discussions, which potentially provides the symplectic structure of these physical quantities.
These studies might provide us with the whole picture of the quantum theory of gravity within the context of asymptotic flatness background.

One of the most significant properties of the BMS group is the presence of an infinite number of supertranslation symmetries, which generalize the standard spacetime translations. 
These symmetries have profound physical implications: for example, it has been shown that the action of a supertranslation corresponds to the so-called gravitational memory effect, which is a detectable phenomenon observable through astronomical measurements \cite{strominger2016gravitational}.
Moreover, by considering the scattering progresses on $\scri^+$, it has been demonstrated that the supertranslation symmetry is equivalent to Weinberg’s soft graviton theorem \cite{he2015bms}. 
Hence, the investigations of the quantization theory of BMS charges may yield valuable insights into the quantum nature of gravitational waves.
Our present work provides a potential approach to study these intriguing topics, which we will continue in future research.

The Carrollian holography, as a promising approach to the flat space holography, has drawn significant attention in recent years \cite{banerjee2023carroll,bagchi2023magic,bagchi20143d,bagchi2022scattering}.
This approach is based on the observation that the  conformal Carroll group on $\scri^+$ is isomorphic to the BMS group associated with the asymptotically flat spacetime in the bulk \cite{duval2014conformal,bagchi2010correspondence,bagchi2012bms}.
Also, asymptotically flat spacetime can be regarded as a specific limit of the asymptotically anti-de Sitter (AdS) spacetime when the Cosmological constant $\Lambda$ approaches zero.
On the boundary side,  this corresponds to taking the ultra-relativistic limit $c\to0$, which induces a Carrollian limit in the boundary conformal field theory (CFT) \cite{ciambelli2018flat}.
This insight indicates that the flat holography is a specific limit of the AdS/CFT corresponding \cite{donnay2023bridging}.
Henceforth, the investigations of quantization at $\scri^+$ will benefit to better understanding of these issues.
We plan to pursue these intriguing directions in future work.

This paper is organized as following:
In Sec. \ref{sec:WIH_rev}, we provide a brief review of WIH, including its definition and key properties.
In Sec. \ref{sec: scri_as_iso}, we first present the definition of the asymptotically flat spacetime.
We then review briefly that $\scri^+$ is equivalent to a WIH under the divergence-free conformal frame.
In Sec. \ref{sec:wih_as_su2}, we show that $\scri^+$ can be dynamically described by two Chern-Simons theories.
In Sec. Sec. \ref{sec:quanti_scri}, we apply the Chern-Simons quantization scheme to $\scri^+$ and compute its entropy by counting the corresponding microstates.
Conclusions and discussions are given in Sec. \ref{sec:con_out}.

\section{Geometry of the weakly isolated horizon}\label{sec:wih}\label{sec:WIH_rev}
In this section, we provide a brief review of the definition and key properties of WIH.
WIH is a special class of non-expanding horizon (NEH), and an NEH is defined as  (e.g., see \cite{lewandowski2006symmetric,ashtekar2022non}):
\begin{definition}\label{def:NEH}
    In a 4-dimensional spacetime $(\sm, g_{ab})$, a null hypersurface $\Delta$ is an NEH if  
    \begin{enumerate}[label=(\roman*)]
        \item The topology of $\Delta$ is $\hat{\Delta}\times \mathbb{R}$, with the $\hat{\Delta}$ is an $S^2$.
        \item Given $l^a$ as the null normal of $\Delta$, its $\theta_{(l)}$ vanishes. 
        \item Einstein's equations hold on $\Delta$ and the Ricci tensor $R_{ab}$ of $g_{ab}$ satisfies $R_b^{\, a}l^b\hat{=}\alpha l^a$, with $\alpha$ is a function on $\Delta$ and $\hat{=}$ denotes that the equation is evaluated on $\Delta$.
      \end{enumerate} 
\end{definition}
Given a vector field $l^a$ as the null normal of $\Delta$, Einstein's equations combing with  requirement $(iii)$ in def. \ref{def:NEH} imply 
\begin{equation}
    8\pi GT_{ab}l^al^b
    \hat{=}R_{ab}l^al^b
    +\frac{1}{2}Rg_{ab}l^al^b
    \hat{=}\alpha l^al_a
    \hat{=}0.
\end{equation}
Hence, the physical interpretation of trequirement $(iii)$ in def. \ref{def:NEH} is that there is no matter field crossing $\Delta$.
The Raychaudhuri equation for a null hypersurface generalized associated with $l^a$ reads \cite{wald2010general}
\begin{equation}
    l^c\partial_c\theta_{l}
    \hat=-\frac{1}{2}\theta^2_{(l)}
    -\sigma_{(l)ab}\sigma^{ab}_{(l)}
    -R_{ab}l^al^b
    +\kappa_{(l)}\theta_{(l)}.
\end{equation}
According to the vanishing of $\theta_{(l)}$, we find the shear of $\Delta$ vanishes
\begin{equation}
    \sigma^{ab}_{(l)}\hat=0.
\end{equation}
The spacetime metric $g_{ab}$ induces a reduced "metric", denoted as $q_{ab}$, on $\Delta$.
However, since $\Delta$ is a null hypersurface, $q_{ab}$ is degenerated, with signature $\lt(0,\,+,\,+\rt)$.
The vanishing expansion $\theta_{(l)}$ and the shear $\sigma^{ab}_{(l)}$ imply that $\mathcal{L}_{\vec{l}}q_{ab}\hat{=}0$, which means $q_{ab}$ is invariant under the "time" evolution generated by $l^a$, see \cite{lewandowski2006symmetric}.

From the fact that both the expansion and the shear vanish on $\Delta$, one can uniquely determine a derivative operator $D_a$ on $\Delta$, which is compatible with $q_{ab}$:
\begin{equation}
    D_aq_{bc}\hat=0.
\end{equation}
With this construction, we can define a one-form field $\boldsymbol{\o}_{(l)}$  corresponding to $l^a$ as
 \begin{equation}\label{eq:def_of_ome}
    D_bl^a:\hat=\omega_{(l)b}l^a.
 \end{equation}
Since $l^a$  has the freedom of rescaling, the one-form field $\boldsymbol{\o}_{(l)}$ is not uniquely defined.
 Supposed $l'^a\hat=fl^a$, with $f$ is a smooth, positive function on $\Delta$. Then the $\boldsymbol{\o}_{(l')}$ corresponding to $l'^a$  is given by
 \begin{equation}
    \omega_{(l')b}l'^a\hat=D_bl'^a\hat=fD_bl^a+l^aD_bf
    \hat=f\omega_{(l)b}l^a+l^aD_bf.
 \end{equation}
It follows that 
\begin{equation}
\omega_{(l')a}\hat=\omega_{(l)a}+D_a\ln f.
\end{equation}
This rescaling freedom can be fixed by imposing the divergence-free condition on  $\omega_{(l)a}$
\begin{equation}
    D^a\omega_{(l)a}\hat=q^{ab}D_a\omega_{(l)b}\hat{=}0,
\end{equation}
with $q^{ab}$ is one of the inverse metrics of $q_{ab}$, which is not uniquely defined, due to the degeneracy of $q_{ab}$.
This condition fixes the scalar freedom of $l^a$ up to a constant $C$, see \cite{ashtekar2024null}. 
Therefore, we can introduce an equivalence class of $l^a$: 
$l^a$ and $l'^a$ are considered equivalent if $l'a=Cl^a$, with $C$ is a constant. The equivalence class is denoted by $[l^a]$.

A WIH is a type of special NEH, equipped with an equivalence class $[l^a]$, satisfying (e.g., see \cite{ashtekar1998quantum, ashtekar2000generic})
\begin{equation}\label{eq:def_WIH}
\lie_{\vec{l}}D_bl^a\hat{=}l^a\lie_{\vec{l}}\omega_{(l)b}\hat=0.
\end{equation}
Here we have used $\lie_{\vec{l}}l^a\hat=0$.
Equivalently, eq. \eqref{eq:def_WIH} can be rewritten as
\begin{equation}\label{eq:def_WIH_2}
    0\hat{=}\lie_{\vec{l}}\omega_{(l)a}
    \hat{=}\lt[\lie_{\vec{l}}D_a,D_a\lie_{\vec{l}}\rt]l^b.
\end{equation}
Consequently, we can now introduce the definition of the WIH:
\begin{definition}\label{def:wih}
    A WIH $\lt(\Delta,\lt[l^a\rt]\rt)$ is an NEH equipped with an equivalence class $\lt[l^a\rt]$, such that $\forall\, l^a \in \lt[l^a\rt]$, the following condition holds:
    \begin{equation}\label{eq:def_WIH_fin}
        \lt[\lie_{\vec{l}}D_a,D_a\lie_{\vec{l}}\rt]l^b\hat{=}0.
    \end{equation}
\end{definition}
The surface gravity $\kappa_{(l)}$ of the WIH is given by 
\begin{equation}
    \kappa_{(l)}:=l^a\omega_{(l)a}.
\end{equation}
It has been proven that $\kappa_{(l)}$ is constant on the WIH \cite{ashtekar2000generic}. 
In particular, if $\kappa_{(l)}=0$,  $\lt(\Delta,\lt[l^a\rt]\rt)$ is an extremal WIH.
It is useful to introduce the Newman-Penrose (NP) frame $\lt(l^a,n^a,m^a,\bar{m}^a\rt)$ for further discussion \cite{newman1962approach,chandrasekhar1985mathematical}.
Where $l^a$ is chosen from the equivalence class $\lt[l^a\rt]$.
$n^a$ is given by $n_a=d(v)_a$, with $v$ is the affine parameter along $l^a$. As a result, the normalization condition $g_{ab}l^an^a=-1$ holds.
The null hypersurface $\Delta$ is foliated by $v=constant$ 2-sphere with $l^a$ and $n^a$ as its null normals.
$m^a$ is a complex null vector, tangents to these 2-spheres, satisfies $\lie_{\vec{l}}m^a=0$,
and its complex conjugate $\bar m^a$ obeys normalization condition $g_{ab}m^a\bar{m}^b=1$.
The volume element of each 2-sphere cross-section is then given by 
\begin{equation}
    \epsilon_{ab}\equiv im_a\wedge\bar{m}_b.
\end{equation}
The components of the spacetime metric in this frame are then expressed as
\begin{equation}
    \left(g_{\mu \nu}\right)=\left[\begin{array}{cccc}
    0 & 1 & 0 & 0 \\
    1 & 0 & 0 & 0 \\
    0 & 0 & 0 & -1 \\
    0 & 0 & -1 & 0
    \end{array}\right].
    \end{equation}
The following components of the Weyl tensor in the NP frame will be useful:
\bea 
\Psi_0&:=&C_{abcd}l^am^bl^cm^d\hat{=}0,\\
\Psi_1&:=&C_{abcd}l^am^bl^cn^d\hat{=}0,\\
\Psi_2&:=&C_{abcd}l^am^b\bar{m}^cn^d\hat{=}\text{Re}\,\Psi_2+i\text{Im}\,\Psi_2,\label{eq:def_of_psi_2}\label{eq:def_psi_2}
\eea
with
\begin{equation}
    \text{Re}\,\Psi_2\hat{=}-\frac{1}{4}\calr+\frac{1}{24}R,\quad
    \text{Im}\,\Psi_2\hat{=}D_{[a}\omega_{(l)b]}.
\end{equation}
Here, $R$ denotes the Ricci scalar of the spacetime metric, while $\calr$ represents the Ricci scalar of the $S^2$.
$\text{Im}\,\Psi_2$ defines the angular momentum of the WIH \cite{ashtekar2004multipole}.
Based on the value of $\text{Im}\,\Psi_2$, the WIH is classified into the following two categories:
\begin{itemize}
    \item \textbf{Static}: In these cases,
            \begin{equation}\label{eq:reiu_static}
                \text{Im}\,\Psi_2=0.
            \end{equation}
        This means that the horizon is not rotating. 
    \item \textbf{Non-Static}: In these cases,
    \begin{equation}
        \text{Im}\,\Psi_2\neq0.
    \end{equation}
\end{itemize}
\section{$\scri^+$ as a weakly isolated horizon}\label{sec: scri_as_iso}
In this section, we briefly review of the work by Ashtekar and Speziale, which demonstrates that   $\scri^+$ is equivalent to a WIH in a divergence-free conformal frame \cite{ashtekar2024null, ashtekar2024charge}.

The definition of an asymptotically flat spacetime based on Penrose's conformal completion strategy is given as follows \cite{geroch1977asymptotic,ashtekar1980null,wald2000general}: 
\begin{definition}
    A Physical spacetime $\left(\mathscr{M},g_{ab}\right)$ is asymptotically flat if there exists a manifold $\tilde{\mathscr{M}}$ with $\exists\, i^0\in\tilde{\mathscr{M}}$ such that $\overline{\mathrm{J}^{+}\left(i^0\right)} \cup \overline{\mathrm{J}^{-}\left(i^0\right)}=\tilde{\mathscr{M}}-\mathscr{M}$.
    With $i^0$ is the spatial infinity. The boundary of $\tilde{\mathscr{M}}$ is $\dot{\mathscr{M}}=\mathscr{I}^{+} \cup \mathscr{I}^{-} \cup\left\{i^0\right\}$. 
    Here, $\mathscr{I}^+$ ($\mathscr{I}^-$) is called the future (past) null infinity.
    The metric $\tilde{g}_{ab}$ on $\tilde{\mathscr{M}}$ is a conformal transformation of $g_{ab}$, $\tilde{g}_{ab}=\O^2g_{ab}$,
    with $\O$ is a non-vanishing function on $M$.
    $\left(\mathscr{M},g_{ab}\right)$ and $\left(\mathscr{
        \tilde M},\tilde g_{ab}\right)$ satisfy:
    \begin{enumerate}[label=(\alph*)]\label{def:Asymptotic}
        \item $\scri^{+}$ has topology $S^2\times R$ and its causal past contains a non-empty portion of $\mathscr{M}$.
        \item $\O$ is smooth on $\tilde{M}-\{i^0\}$ and $C^2$ at $i^0$.
        \item $\left.\Omega\right|_{\dot{\mathscr M}}=0$, $\left.\tilde{\nabla}_a \Omega\right|_{\mathcal{J}^{ \pm}} \neq 0$ and $\lim _{\rightarrow i^0} \tilde{\nabla}_a \Omega=0$, with $\tilde{\nabla}_a$ is compatible with $\tilde{g}_{ab}$.
        \item The physical metric satisfies  Einstein’s equations $R_{a b}-\frac{1}{2} R g_{a b}=8 \pi G T_{a b}$, where $\O^{-1}T_{ab}$ admits a smooth limit to $\mathscr{I}^+$.
    \end{enumerate}
\end{definition}
In this definition, $\left(\tilde{\mathscr{M}},\tilde{g}_{ab}\right)$ is called the unphysical spacetime, which corresponds to the conformal completion of the  physical spacetime $\left(M,g_{ab}\right)$.
The vector field $\tilde{n}_a:=\tilde{\nabla}_a\O$ is the normal of $\scri^+$.
Einstein's equations can be rewritten in terms of the conformally rescaled  quantities
\begin{equation}\label{eq:confomal_Einstein}
    \tilde{R}_{a b}-\frac{1}{2} \tilde{R} \tilde{g}_{a b}+2 \Omega^{-1}\left(\tilde{\nabla}_a \tilde{n}_b-\left(\tilde{\nabla}^c\tilde{n}_c\right) \tilde{g}_{a b}\right)+3 \Omega^{-2}\left(\tilde{n}^c \tilde{n}_c\right) \tilde{g}_{a b}=8 \pi G T_{a b},
    \end{equation}
    where $\tilde R_{ab}$ is the Ricci tensor of $\tilde g_{ab}$.
    In the following discussions, we focus on the physics at $\mathscr{I}^+$.
    Eq. \eqref{eq:confomal_Einstein}, together with assumption (d) on the fall-off condition of the stress-energy tensor implies the following properties of $\scri^+$ \cite{geroch1977asymptotic,ashtekar2014geometry}:
    \begin{enumerate}
        \item $\tilde{n}^a\tilde{n}_a\hat{=}0$. Here $\hat{=}$ means the equation is evaluated at $\mathscr{I}^+$.
        Since $\tilde{n}^a$ is the normal of $\mathscr{I}^+$, this property indicates $\mathscr{I}^+$ is a null hypersurface.
        \item There is a degree of confomal freedom for  the choice of the conformal factor $\O$.
            Given  $\O$ satisfying the requirements in def. \ref{def:Asymptotic}, a rescaled function $\O'=\mu\O$, with $\mu$ is nowhere vanishing on $\tilde{\mathscr{M}}$, also defines a conformal completion.
            This freedom can be fixed by imposing the divergence-free condition on the conformal frame, i.e., $\tilde{\nabla}_a\tilde{n}^a\hat{=}0$.
           We call this conformal frame as the divergence-free conformal frame.
            With eq. \eqref{eq:confomal_Einstein}, the divergence-free conformal frame indicates 
            \begin{equation}
                \tilde{\nabla}_a\tilde{n}^b\hat{=}0.
            \end{equation}
            \item The Schouten tensor of $\tilde{g}_{ab}$, defined as $\tilde{S}_a{ }^b=\tilde{R}_a{ }^b-\frac{1}{6}, \tilde{R} \delta_a{ }^b$,  satisfies $\tilde{S}_a{ }^b \tilde{n}^a \hat{=}-\tilde{f} \tilde{n}^b$, where $\tilde{f}=\O^{-2}\tilde{n}^a\tilde{n}_a$.
            Therefore, condition (c) in def. \ref{def:Asymptotic} implies $\tilde{R}_a{ }^b \tilde{n}^a \hat{=} \alpha \tilde{n}^b$, with $\alpha \hat{=} \frac{1}{6} \tilde{R}-\tilde{f}$.
            \item  The Weyl tensor $\tilde{C}_{abc}{}^d$ vanishes on $\mathscr{I}^+$.
            Hence if $\tilde{g}_{ab}$ is $C^3$ at $\mathscr{I}^+$, then $\tilde{K}_{abcd}:=\O^{-1}\tilde{C}_{abcd}$ admits a continuous limit to $\mathscr{I}^+$.
    \end{enumerate}
Properties 1-3 indicate that $\mathscr{I}^+$ is a NEH in the unphysical spacetime $\left(\tilde{\mathscr{M}},\tilde{g}_{ab}\right)$.
Indeed, $\tilde{n}^a$ here plays the same role as $l^a$ introduced in section \ref{sec:wih}.
Compared with eq. \eqref{eq:def_of_ome}, the divergence-free condition $\tilde{\nabla}_a\tilde{n}^a\hat{=}0$ indicates that the associated one-form field $\boldsymbol{\o}_{(\tilde{n})}$  vanishes.
Therefore, we have
\begin{equation}
    \mathcal{L}_{\vec{\tilde{n}}}\boldsymbol{\o}_{(\tilde{n})}=0.
\end{equation}
Compared with eq. \eqref{eq:def_WIH_2}, we find that satisfies that $\mathscr{I}^+$ satisfies the conclusions of a WIH. 
Furthermore, the vanishing of $\boldsymbol{\o}_{(\tilde{n})}=0$ indicates that the the surface gravity of $\scri^+$ also vanishes.
Hence, it is an extremal WIH. 
Similar to the case of a WIH, the Weyl scalar $\tilde{\Psi}$ of $\scri^+$ can be decomposed into its real part and imaginary parts:
\begin{equation}
    \tilde{\Psi}_2
    =\text{Re}\tilde{\Psi}_2
    +\text{Im}\tilde{\Psi}_2,
\end{equation} 
with
\begin{equation}
    \text{Im}\tilde{\Psi}_2
    =\tilde{D}_{[a}\o_{(\tilde{n})b]}.
\end{equation}
Here,  $\tilde{D}$ is the covariant derivate compatible to $\tilde{q}_{ab}$, with $\tilde{q}_{ab}$ is the reduced metric of $\tilde{g}_{ab}$ on $\scri^+$.
Furthermore, $\boldsymbol{\o}_{(\tilde{n})}=0$ indicates
\begin{equation}
    \text{Im}\tilde{\Psi}_2=0.
\end{equation}
\section{$\scri^+$ as $SU(2)$ Chern-Simon theories}\label{sec:wih_as_su2}

In Ref. \cite{perez2011static}, it is shown that a static WIH can be described dynamically by two Chern-Simons theories.
In this section, we will see that this result also holds on $\scri^+$, based on the fact that $\scri^+$ is a WIH with $\text{Im}\tilde{\Psi}_2=0$.
\subsection{The main equations on $\scri^+$}
In this work, the topology of  the unphysical spacetime manifold $\tilde{\mathscr{M}}$ is assumed to be $\tilde{M}\times R$, where $\tilde M$ is a spacelike hypersurface representing the spatial slice of the foliation.
The cross-section $\tilde{\Delta}=\tilde{M}\cap \scri^+$ is a spacelike 2-sphere.
For further discussions, it is convenient to reformulate  the theory using the Ashtekar-Barbero variables, as introduced in \cite{ashtekar1986new,ashtekar2004background,han2007fundamental,rovelli2004quantum}.  
Since $\scri^+$ lies within the unphysical spacetime $\left(\mathscr{\tilde M},\tilde g_{ab}\right)$,
we express the unphysical metric in terms of the co-tetrad as
\begin{equation}\label{eq:def_tet}
    \tilde g_{ab}=e_a^{\,I}e_b^{\,J}\eta_{IJ}.
\end{equation}
The internal index $I$ refers to the Minkowski index $I=(0,\,i)$, where $i$ takes the value of $i=1,\,2,\,3$.
The internal $\eta_{IJ}$  is the Minkowski metric, given by $\eta_{IJ}=(-1,\,1,\,1,\,1)$.
The tetrad $e^a_I$ and the co-tetrad $e_i^{\,J}$ are inverse to each other, satisfying
\begin{equation}
    e_a^{\,I}e_{I}^{\,b}=\delta_a^{\,b},\quad
    e_a^{\,J}e_{I}^{\,a}=\delta_I^{\,J}.
\end{equation}
With the tetrads introduced previously, we define a connection one-form $\o_a^{IJ}$ as
\begin{equation}\label{eq:def_of_connection}
    \omega_a^{\,IJ} :=e^{Ib}\tilde\nabla_ae_b^{\,J}.
\end{equation}
It is obvious that the internal indices of $\omega_a^{\,IJ}$ are antisymmetric: $\omega_a^{IJ}=-\omega_a^{JI}$.
Equivalently, $\o_a^{IJ}$ satisfies the following relation:
\begin{equation}\label{eq:eq_for_ome}
    \tilde{\nabla}_be^I_a=-\o^I_{bJ}e_a^J.
\end{equation}

A family of hypersurfaces are introduced by 3+1 decomposition of the spacetime, whose extrinsic curvatures are denoted as $K_{ab}$.
With the tetrads introduced previously, we define the extrinsic curvature 1-form as
\begin{equation}
    K_a^{\,i}:=K_{a}^{\,b}e_b^{\,i},
\end{equation}
which satisfies the relation given in \cite{thiemann2008modern}
\begin{equation}\label{eq:rela_K_ome}
    K_a^{\,i}=2\omega_a^{0i}.
\end{equation}
Then the Ashtekar self-dual connection is defined as 
\begin{equation}
    A_a^{+i}:=\Gamma_a^{\,i}+i K_a^{\,i},
\end{equation}
where $\Gamma_a^{\,i}$ is the spin connection, given by \cite{thiemann2008modern}
\begin{equation}
    \Gamma_a^{\,i}=\frac{1}{2}\epsilon^{ijk}e_k^{\,b}
    \left(
        \partial_be_a^{\,j}
        -\partial_ae_b^{\,j}
        +e_j^{\,c}e_a^{\,l}\partial_be_c^{\,l}
    \right).
\end{equation} 
The curvature 2-form of $ A_a^{+i}$ is given by 
\begin{equation}\label{eq:curva_forsel_dual}
    F^i(A^+):=dA^{+i}+\frac{1}{2}\epsilon^i_{\,jk}A^{+j}\wedge A^{+k},
\end{equation}
with $\epsilon_{ijk}$ being the Levi-Civita symbol.
The spatial tetrads $\{e_i^{\,a}\}$ possess an $SU(2)$ gauge freedom. 
Before proceeding further, we introduce the following gauge fixing:
$e_1^{\,a}$ is normal to $\hat{\Delta}$, and both $e_2^{\,a}$ and $e_3^{\,a}$ are tangent to $\hat{\Delta}$.
Then the NP frame on $\scri^+$ is given by
\begin{equation}\label{eq:gauge_fixing}
    \tilde{n}^a=\frac{1}{2}\lt(e_0^{\,a}+e_1^{\,a}\rt),\quad
    \tilde{l}^a=\frac{1}{2}\lt(e_0^{\,a}-e_1^{\,a}\rt),\quad
    \tilde{m}^a=\frac{1}{2}\lt(e_2^{\,a}+ie_3^{\,a}\rt).
\end{equation}
The bi-vector 2-form is defined as
\begin{equation}\label{eq:def_of_bi_vector}
    \Sigma^i=\epsilon^i_{\,jk}e^j\wedge e^k,
\end{equation}
Cartan's second structural equation implies 
\begin{equation}
    F_{ab}^i(A^+)=-\frac{1}{4}\tilde{R}_{ab}^{\,\,\,\,cd}\Sigma_{ab}^{+i},
\end{equation}
where
\begin{equation} \label{eq:def_of_sig_+}
    \Sigma^{+i}=\Sigma^i+2i  e^0 \wedge e^i.
    \end{equation}
Here, $\tilde{R}_{abc}^{\,\,\,\,\,\,\,d}$ is the Reimann tensor of $\tilde{g}_{ab}$.
It is calculated in \cite{chandrasekhar1998mathematical}, which yields
\begin{equation}\label{eq:curvature_on_wih}
    \underset{\Leftarrow}{F_{ab}}^i(A^+)
    =\lt(\tilde\Psi_2-\tilde\Phi_{11}-\frac{\tilde R}{24}\rt)\underset{\Leftarrow}{\Sigma^i_{ab}}.
\end{equation}
Here,    "$\Leftarrow$" means pulling back to the $\hat{\Delta}$.
The scalar field $ \tilde\Phi_{11}$ is defined as
\begin{equation}
    \tilde\Phi_{11}:=\tilde R_{a b}\left(\tilde l^a\tilde n^b+\tilde m^a\tilde{ \bar{m}}^b\right) / 4.
    \end{equation}

The following theorems are useful for our further discussions:
\begin{theorem}\label{thm:for_0}
    Under the gauge fixing introduced earlier, we have
    \begin{equation}
\underset{\Leftarrow}{\boldsymbol{\omega}}^{21}=\underset{\Leftarrow}{\boldsymbol{\omega}}^{20},\quad
\underset{\Leftarrow}{\boldsymbol{\omega}}^{31}=\underset{\Leftarrow}{\boldsymbol{\omega}}^{30}.
\end{equation}
\end{theorem}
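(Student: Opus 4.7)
The essential input is Property 2 of the divergence-free conformal frame established in Section \ref{sec: scri_as_iso}, namely $\tilde{\nabla}_a\tilde{n}^b\hat{=}0$ on $\scri^+$. Combined with the gauge fixing \eqref{eq:gauge_fixing}, which writes $\tilde{n}^b=\tfrac{1}{2}(e_0^{\,b}+e_1^{\,b})$, this turns the identity we wish to prove into a purely algebraic statement about the spin connection components once one expands $\tilde{\nabla}_a$ in the tetrad frame.

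The plan is as follows. First, I would transpose the co-tetrad compatibility relation \eqref{eq:eq_for_ome} into its inverse-tetrad counterpart,
\begin{equation}
\tilde{\nabla}_a e^{\,b}_{I}=\omega^{\,J}_{aI}\,e^{\,b}_{J},
\end{equation}
which follows by differentiating $e^{\,a}_I e_a^{\,J}=\delta^{\,J}_I$ and using \eqref{eq:eq_for_ome}. Substituting $\tilde{n}^b=\tfrac12(e_0^{\,b}+e_1^{\,b})$ into $\tilde{\nabla}_a\tilde{n}^b\hat{=}0$ then yields
\begin{equation}
\lt(\omega^{\,J}_{a0}+\omega^{\,J}_{a1}\rt)e^{\,b}_{J}\hat{=}0.
\end{equation}
Since the tetrad $\{e^{\,b}_J\}$ is linearly independent, the coefficient of each $e^{\,b}_{J}$ must vanish separately: $\omega^{\,J}_{a0}+\omega^{\,J}_{a1}\hat{=}0$. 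Lowering the second internal index with the Minkowski metric $\eta_{IJ}=\mathrm{diag}(-1,1,1,1)$ converts this into $\omega_a^{\,J1}\hat{=}\omega_a^{\,J0}$, and specializing to $J=2,3$ gives the desired $\omega_a^{\,21}\hat{=}\omega_a^{\,20}$ and $\omega_a^{\,31}\hat{=}\omega_a^{\,30}$. Finally, restricting both sides to the tangent space of $\tilde{\Delta}$ produces the pulled-back identity claimed in the theorem. (As a consistency check, the $J=0$ and $J=1$ cases reproduce $\omega_a^{\,01}\hat{=}0$, i.e. the vanishing of the surface-gravity one-form $\boldsymbol{\omega}_{(\tilde{n})}$, which matches the extremal character of $\scri^+$ noted in Section \ref{sec: scri_as_iso}.)

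There is no real technical obstacle here: the claim is essentially a transcription of the geometric statement $\tilde{\nabla}_a\tilde{n}^b\hat{=}0$ into internal-index language under the chosen gauge. The only place that requires care is the bookkeeping of signs when lowering internal indices with the Lorentzian $\eta_{IJ}$, and the observation that the $J=2,3$ components of the resulting equation are exactly the two identities stated in Theorem \ref{thm:for_0}. No use of the pulled-back curvature \eqref{eq:curvature_on_wih} or of $\mathrm{Im}\,\tilde{\Psi}_2=0$ is needed at this stage; those will enter later when relating the Ashtekar self-dual curvature to the bi-vector $\Sigma^{+i}$ on $\hat{\Delta}$.
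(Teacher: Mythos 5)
Your proposal is correct, but it follows a different route from the paper. The paper proves the theorem by invoking the NEH/WIH character of $\scri^+$: it writes the expansion $\theta\hat{=}\tilde m^a\bar{\tilde m}^b\tilde\nabla_a\tilde n_b$ and the shear $\sigma\hat{=}\tilde m^a\tilde m^b\tilde\nabla_a\tilde n_b$, expands them in the gauge-fixed tetrad via the definition $\omega_a^{\,IJ}=e^{Ib}\tilde\nabla_a e_b^{\,J}$, and deduces from $\theta\hat{=}\sigma\hat{=}0$ (together with the complex conjugates) that the combinations $\tilde m^a\lt(\omega_a^{21}-\omega_a^{20}\rt)$ and $\tilde m^a\lt(\omega_a^{31}-\omega_a^{30}\rt)$ vanish, which is exactly the pulled-back statement since the pullback to $\hat\Delta$ only probes the $\tilde m,\bar{\tilde m}$ directions. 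You instead start from the stronger input $\tilde\nabla_a\tilde n^b\hat{=}0$ (Property 2 of the divergence-free frame, stated in Sec.~3), convert it with the inverse-tetrad compatibility relation and linear independence of $\{e_J^{\,b}\}$ into $\omega^{\,J}_{a0}+\omega^{\,J}_{a1}\hat{=}0$, and read off $\omega_a^{\,J1}\hat{=}\omega_a^{\,J0}$ for every $J$; your sign bookkeeping with $\eta_{IJ}$ is correct, and the $J=2,3$ components pulled back to $\hat\Delta$ give the theorem. Your argument is more direct and yields a strictly stronger conclusion (the equalities hold with a free spacetime form index at $\scri^+$, not only after pullback, and the $J=0,1$ components reproduce $\boldsymbol{\omega}_{(\tilde n)}\hat{=}0$, which the paper derives separately afterwards). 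The trade-off is that it leans on the full condition $\tilde\nabla_a\tilde n^b\hat{=}0$, which is special to the divergence-free conformal completion, whereas the paper's argument uses only vanishing expansion and shear and therefore carries over verbatim to a generic NEH/WIH, where only the pulled-back identities are available. Incidentally, your route sidesteps a cosmetic defect of the paper's proof, where the displayed computations of $\theta$ and $\sigma$ are typographically identical.
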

The proof of this theorem can be found in appendix. \ref{app:detail}.
The definition of $\boldsymbol{\o}_{(\tilde{n})}$ implies
\begin{equation}\label{eq:det_o}
    \o_{(\tilde{n})a}=-\tilde{l}_b\tilde{D}_a\tilde{n}^b,
\end{equation}
From eqs. \eqref{eq:eq_for_ome} and \eqref{eq:gauge_fixing}, we obtain
\begin{equation}
    \begin{aligned}
        \tilde{\nabla}_a\tilde{n}_b
        \hat{=}\frac{1}{\sqrt{2}}
        \tilde{\nabla}_a
        (
            e^1_{b}
            -e^0_{b}
        )
        =-\frac{1}{\sqrt{2}}
        (
            \underset{\Leftarrow}{\omega_a}^{1J}e_{Jb}
            -\underset{\Leftarrow}{\omega_a}^{0J}e_{Jb}
        ).
    \end{aligned}
\end{equation}
Then \eqref{eq:det_o} reads
\begin{equation}
    \begin{aligned}
        \o_{(\tilde{n})a}
    =-\tilde{l}^b\tilde{D}_a\tilde{n}_b
    =&\frac{1}{2}
    (
        e_1^b-e_0^b
    )
    (
            \underset{\Leftarrow}{\omega_a}^{1J}e_{Jb}
            -\underset{\Leftarrow}{\omega_a}^{0J}e_{Jb}
    )\\
    =&-\underset{\Leftarrow}{\omega_a}^{01}
    =-\frac{1}{2}\underset{\Leftarrow}{K_a}^1.
    \end{aligned}
\end{equation}
Since $\boldsymbol{\omega}_{(\tilde{n})}$ vanishes in the divergence-free conformal frame, we conclude that
\begin{equation}\label{eq:k_1_0}
    \underset{\Leftarrow}{K_a}^1=0,
\end{equation}
which leads to the following theorem:
\begin{theorem}\label{thm:thre_for_KK}
 On $\scri^+$, we have
\begin{equation}
    \underset{\Leftarrow}{K}^j \wedge \underset{\Leftarrow}{K}^k \epsilon_{i j k}=c \underset{\Leftarrow}{\Sigma}^i,
    \end{equation}
    where $c$ is a scalar  field on $\tilde\Delta$.
\end{theorem}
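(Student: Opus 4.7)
The plan is to reduce both sides of the identity to 2-forms on the cross-section $\tilde\Delta$ and exploit the fact that the bundle of top forms on the 2-sphere $\tilde\Delta$ has rank one. The two essential inputs are the gauge fixing that singles out $e_1^a$ as the normal to $\tilde\Delta$ (with $e_2^a,\,e_3^a$ tangent) and the vanishing relation $\underset{\Leftarrow}{K}^1 = 0$ derived in eq. \eqref{eq:k_1_0}.

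First, I would use the duality between tetrad and co-tetrad, $e_a^{\,I} e_J^{\,a} = \delta^I_J$, together with the gauge fixing to show that $e^1$ annihilates every vector tangent to $\tilde\Delta$. Since $T\tilde\Delta = \mathrm{span}\{e_2^a,\,e_3^a\}$, this gives $\underset{\Leftarrow}{e}^1 = 0$ at once. Substituting into $\Sigma^i = \epsilon^i{}_{jk}\, e^j \wedge e^k$ yields $\underset{\Leftarrow}{\Sigma}^2 = \underset{\Leftarrow}{\Sigma}^3 = 0$, while $\underset{\Leftarrow}{\Sigma}^1 = 2\,\underset{\Leftarrow}{e}^2 \wedge \underset{\Leftarrow}{e}^3$ is, up to a numerical factor, the non-degenerate area form of $\tilde\Delta$.

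Next, I would expand the left-hand side componentwise. Because $\underset{\Leftarrow}{K}^1 = 0$, every surviving term in the $i=2$ and $i=3$ components of $\epsilon_{ijk}\,\underset{\Leftarrow}{K}^j \wedge \underset{\Leftarrow}{K}^k$ contains a factor of $\underset{\Leftarrow}{K}^1$ and therefore vanishes, while the $i=1$ component reduces to $2\,\underset{\Leftarrow}{K}^2 \wedge \underset{\Leftarrow}{K}^3$. Thus both sides of the target identity are supported entirely in their $i=1$ components, and the theorem reduces to producing a scalar $c$ on $\tilde\Delta$ with $\underset{\Leftarrow}{K}^2 \wedge \underset{\Leftarrow}{K}^3 = c\,\underset{\Leftarrow}{e}^2 \wedge \underset{\Leftarrow}{e}^3$. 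Since $\tilde\Delta$ is two-dimensional and $\underset{\Leftarrow}{e}^2 \wedge \underset{\Leftarrew}{e}^3$ is nowhere vanishing, such a $c$ is uniquely defined pointwise as the ratio, yielding the desired scalar field.

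I do not anticipate a serious obstacle; the argument is essentially a dimension count combined with antisymmetry, built on the gauge fixing and eq. \eqref{eq:k_1_0}. The one step meriting explicit verification is $\underset{\Leftarrow}{e}^1 = 0$, which, as indicated above, follows immediately from tetrad--co-tetrad duality once $T\tilde\Delta$ has been identified with $\mathrm{span}\{e_2^a,\,e_3^a\}$. Everything else is manipulation of the antisymmetric symbol $\epsilon_{ijk}$.
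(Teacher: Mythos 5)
Your proposal is correct and follows essentially the same route as the paper's own proof: both use the gauge fixing to get $\underset{\Leftarrow}{e}^1=0$ (hence $\underset{\Leftarrow}{\Sigma}^2=\underset{\Leftarrow}{\Sigma}^3=0$) together with $\underset{\Leftarrow}{K}^1=0$ from eq. \eqref{eq:k_1_0}, reducing everything to the $i=1$ component on the two-dimensional cross-section. The paper's final step, expanding $K^A=M^A{}_{\,B}\,e^B$ and setting $c=\det M$, is just an explicit version of your ratio-of-top-forms argument.
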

The proof of this theorem can be found in Appendix \ref{app:detail}.
\begin{remark}
    An isolated horizon (IH) is a special case of WIH, defined by strengthening the boundary condition \eqref{eq:def_WIH_fin} to
    \begin{equation}
        \lt[\lie_{\vec{l}}D_a,D_a\lie_{\vec{l}}\rt]v^b\hat{=}0,\quad
        \forall v\in T(\Delta).
    \end{equation}
    With this enhanced boundary condition, one can demonstrate that $\mathcal{L}_{\vec{l}}c\hat{=}0$ in the case of IH, see \cite{perez2011static} for details.
    However, since $\scri^+$ only satisfies the weaker WIH boundary condition, such nice property do not necessarily hold.
    Nevertheless, this limitation does not affect the discussions in this paper.
\end{remark}
Furthermore, $K^1_a=0$ leads to
\begin{equation}\label{eq:proper_for_static}
    v\lrcorner  \underset{\Leftarrow\Leftarrow}{K_i\Sigma^i}=0,\footnote{Here $\lrcorner$ denotes the interior product: $(\xi\lrcorner\boldsymbol{\omega})_{b_1...b_{p-1}}=\xi^a,\boldsymbol{\omega}_{ab_1...b_{p-1}}$, where $\boldsymbol{\omega}$ is a $p$-form field. It satisfies the Leibniz rule: Given $\boldsymbol{\alpha}$ a p-form field and $\boldsymbol{\beta}$ a q-form field, then $\xi\lrcorner(\boldsymbol{\alpha}\wedge\boldsymbol{\beta})=\xi\lrcorner(\boldsymbol{\alpha})\wedge\boldsymbol{\beta}+(-1)^p\boldsymbol{\alpha}\wedge\xi\lrcorner\boldsymbol{\beta}.$}.
\end{equation}
Here, $v$ is a vector field tangents to $\scri^+$. 
This result follows from the fact that $\underset{\Leftarrow}{\Sigma^A_{ab}}=\pm\epsilon^A_{\,\,\,1B}\underset{\Leftarrow}{e}^1\wedge\underset{\Leftarrow}{e}^B=0$.
\subsection{The action principle and the symplectic structure of the system}
In this work, we consider the action of the unphysical spacetime $\left(\tilde{\mathscr{M}},\tilde{g}_{ab}\right)$, given as \cite{ashtekar1999isolated,engle2010black,perez2011static}
\begin{equation}\label{eq:action}
    S\left[e, A^{+}\right]=-\frac{i}{\kappa} \int_{\tilde{\mathscr{M}}} \Sigma_i^{+}(e) \wedge F^i\left(A^{+}\right),
    \end{equation}     
    where $\kappa=16\pi G$.             
The action of the matter field is absent in eq. \eqref{eq:action}, indicating a "vacuum" of the unphysical spacetime.
This implies that the unphysical metric satisfies the vacuum Einstein's equations
\begin{equation}\label{eq:ein_eq_for_con}
    \tilde{R}_{ab}-\frac{1}{2}\tilde{R}\tilde{g}_{ab}=0.
\end{equation}
As mentioned earlier, the Einstein's equations of the physical metric can be rewritten in terms of conformal quantities as 
\begin{equation}\label{eq:con_eiin_eq}
    \tilde{R}_{ab}
    -\frac{1}{2}\tilde{R}\tilde{g}_{ab}
    +2\O^{-1}\left(\tilde{\nabla}_a\tilde{n}_b-\left(\tilde{\nabla}^c\tilde{n}_c\right)\tilde{g}_{ab}\right)+3\O^{-2}\left(\tilde{n}^c\tilde{n}_c\right)\tilde{g}_{ab}
    =8\pi G T_{ab}.
\end{equation}
Taking limit toward $\scri^+$, we find
\begin{equation}
    \lim _{\rightarrow\scri^+}\O^{-2}\left(\tilde{n}^c\tilde{n}_c\right)\tilde{g}_{ab}
    =\lim_{\rightarrow\scri^+}\left(\tilde{n}^c\tilde{n}_c\right)g_{ab}
    =0,
\end{equation} 
as $\tilde{n}^a$ is the null normal of $\scri^+$.
Then combining eq. \eqref{eq:ein_eq_for_con} with eq. \eqref{eq:con_eiin_eq}, we obtain
\begin{equation}
    \lim_{\rightarrow\scri^+}T_{ab}
    =\frac{1}{4\pi G}\lim_{\rightarrow\scri^+}\O^{-1}\left(\tilde{\nabla}_a\tilde{n}_b-\left(\tilde{\nabla}^c\tilde{n}_c\right)\tilde{g}_{ab}\right).
\end{equation}
Therefore, to fulfill the requirement $d$ of def. \ref{def:Asymptotic}, we must demand that $\tilde{\nabla}_a\tilde{n}_b$ falls off at least as fast as $\O^2$ as it approaches to $\scri^+$.
 This strengthens the earlier requirement $\tilde{\nabla}_a\tilde{n}^a|_{\scri^+}=0$.

 Later, we will show that no boundary term is needed on $\scri^+$ in the variational principle on $\scri^+$ if the allowed variations are chosen such that they preserve the boundary conditions on $\scri^+$ up to diffeomorphisms and gauge transformations.
Therefore, no  boundary term is required in eq. \eqref{eq:action} for the differentiability of the action $S\left[e, A^{+}\right]$.
The self dual version of Einstein’s equations follow from the variation of the action
    \begin{equation}\label{eq:delta_S}
       \delta S\left[e, A_{+}\right]=\frac{-i}{\kappa} \int_{\tilde{\mathscr{M}}} \delta \Sigma_i^{+}(e) \wedge F^i\left(A_{+}\right)-d_{A_{+}} \Sigma_i^{+} \wedge \delta A_{+}^i+d\left(\Sigma_i^{+} \wedge \delta A_{+}^i\right).
   \end{equation}   
They are \footnote{Here $d_{A_{+}}$ denotes the exterior derivative corresponding to the connection $A_+$. For example, $d_{A_{+}} \Sigma_i^{+}=d \Sigma_i^{+}+\epsilon_{ijk}A^j_{+}\wedge\Sigma_k^{+}$.}
\bea \label{eq: eq_of_mo}
&&\epsilon_{i j k} e^j \wedge F^i\left(A_{+}\right)+i e^0 \wedge F_k\left(A_{+}\right)=0,\nonumber\\
&&e_i \wedge F^i\left(A_{+}\right)=0,\nonumber\\
&&d_{A_{+}} \Sigma_i^{+}=0.
\eea

We denote the phase space at $\scri^+$ as $\Gamma$.
Each point $p\in \Gamma$ is parametrized by a pair $p=\lt(\Sigma^+,A_+\rt)$, where $\Sigma^+$ and $A_+$ are the dynamical variables satisfying the boundary conditions at $\scri^+$.
The tangent space of a point $p\in \Gamma$, denoted as $T_p(\Gamma)$, consists of variations $\delta =\left(\delta \Sigma^+,\delta A^+ \right)$ which describe the infinitesimal changes in the bi-vector field and the Ashtekar self-dual connection, respectively. 
As discussed earlier, the allowed variations on $\scri^+$ are restricted to those generated by diffeomorphisms and internal gauge transformations. Their actions on the fields are given by
\beq \label{eq:var_fix_WIH}
\delta e^i=\delta_\alpha e^i+\delta_v e^i,\quad
\delta\Sigma^{+}=\delta_\alpha \Sigma^{+}+\delta_v\Sigma^{+},\quad
\delta A_+=\delta_\alpha A_++\delta_v A_+,
\eeq
where $\alpha$ is the parameter of the gauge transformation, and $v$ is a vector field on $\scri^+$ generating diffeomorphisms.
Here we require $\alpha|_{\partial\scri^+}=0$, $v|_{\partial\scri^+}=0$.
The infinitesimal gauge transformations act as
\beq \label{eq:gaue_transform}
\delta_\alpha e^i=\epsilon^i_{\,jk}\alpha^je^k,\quad
\delta_\alpha\Sigma^{+i}=\epsilon^i_{\,jk}\alpha^j\Sigma^{+k},\quad
\delta_\alpha A_+=-d_{A_+}\alpha,
\eeq
The diffeomorphism-induced variations are given by
\begin{equation}\label{eq:diffeomorphism}
    \begin{aligned}
        \delta_v e^i=&v\lrcorner de^i+d\lt(v\lrcorner e^i\rt),\\
        \delta_v\Sigma^{+i}=&d_{A_+}\lt(v\lrcorner\Sigma^+\rt)^i-\epsilon_{ijk}v\lrcorner A^{+j}\Sigma^{+k},\\
        \delta_vA^{+i}=&v\lrcorner F^i\left(A^+\right)+d_{A^+}\left(v\lrcorner A^+\right).
    \end{aligned}
    \end{equation}
Here we have imposed the Gauss constraint. In \eqref{eq:delta_S}, the boundary term at $\scri^+$, arising from variational principle is given by
\begin{equation}
    B(\delta)=-\frac{i}{\kappa} \int_{\scri^+} \Sigma^+_i \wedge \delta A^i_+.
    \end{equation}
Since the allowed variations are restricted to gauge transformations and diffeomorphisms, 
this boundary term can be decomposed as $B(\delta)=B(\delta_\alpha)+B(\delta_v)$.
Based on the field equations \eqref{eq: eq_of_mo} and the transformation rules \eqref{eq:gaue_transform} and \eqref{eq:diffeomorphism}, 
each part can be shown to vanish respectively.\newline
For the gauge part:
\begin{eqnarray}
    B\left(\delta_\alpha\right)
     &=&\frac{i}{\kappa} \int_{\scri^+} \Sigma^+_i \wedge d_{A_+} \alpha^i
    =-\frac{i}{\kappa} \int_{\scri^+}\left(d_{A_+} \Sigma^+_i\right) \alpha^i
    +\frac{i}{\kappa}\int_{\partial \scri^+} \Sigma_i \alpha^i
    =0
\end{eqnarray}
For the diffeomorphism part:
\begin{eqnarray}
         B\left(\delta_v\right)&=&-\frac{i}{\kappa} \int_{\scri^+} \Sigma^+_i \wedge\left(v\lrcorner F^i(A_+)+d_{A_+}
        \left(v\lrcorner A^i_+\right)\right)\nonumber \\
        &=&-\frac{i}{\kappa} \int_{\scri^+} \Sigma^+_i \wedge\left(v\lrcorner F^i\left(A_+\right)\right)
        +\frac{i}{\kappa} \int_{\scri^+} d_{A_+} \Sigma^+_i\left(v\lrcorner A^i_+\right)
        -\frac{i}{\kappa}\int_{\partial \scri^+} \Sigma^+_i\left(v\lrcorner A^i_+\right)\nonumber\\
        &=&0.
\end{eqnarray}
        Here we have used the fact that both $\alpha$ and $v$ vanish at $\partial \scri^+$.
        Similar discussions can be applied to $\scri^-$.
        Hence, the differentiability of eq. \eqref{eq:action} is preserved without the need for additional boundary terms.

To analyze the physics on $\mathscr{I}^+$, we introduce a family of time-slices that intersect with it, as illustrated in fig. \ref{fig:slice_for_asym_flat}.
These time-slices enable the application of integration by part techniques to construct the symplectic structure on $\tilde{\Delta}$, which we will discuss in detail shortly.
%One example of this kind of foliations is known as the Hyperboloidal foliations \cite{zenginouglu2008hyperboloidal}.
\begin{figure}[H] %H为当前位置，!htb为忽略美学标准，htbp为浮动图形
	\centering %图片居中
	\includegraphics[width=0.4\textwidth]{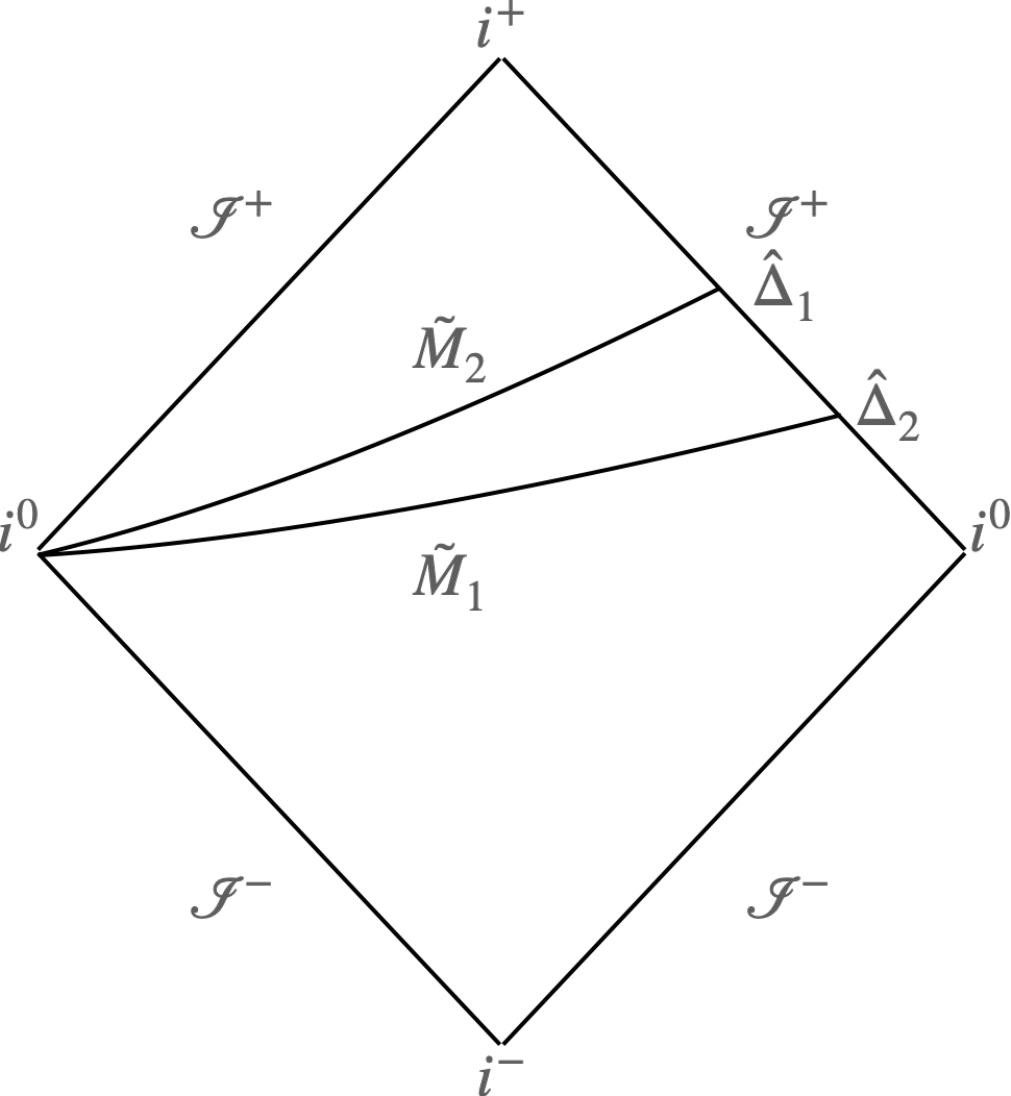} %插入图片，[]中设置图片大小，{}中是图片文件名
	\caption{The time-slices $\tilde{M}_1$ and $\tilde{M}_2$ intersect $\mathscr{I}^+$ with $\tilde{\Delta}_1$ and $\tilde{\Delta}_2$.} %最终文档中希望显示的图片标题
	\label{fig:slice_for_asym_flat} %用于文内引用的标签
	\end{figure}
Following the covariant phase space formalism \cite{lee1990local,wald2000general}, the symplectic potential of the system is extracted from the boundary terms in \eqref{eq:delta_S}, and is given by
\begin{equation}
   \theta(\delta)=\frac{-i}{\kappa} \Sigma_i^{+} \wedge \delta A^{+i} \quad \forall \delta \in T_p \Gamma.
    \end{equation}
It is important to clarify that the "boundary" here does not refer to the boundary of the entire spacetime manifold $\tilde{\mathscr{M}}$.
As illustrated in fig. \ref{fig:slice_for_asym_flat}, it refers instead to a bounded region enclosed by $\tilde{M}_1$ and $\tilde{M}_2$, along with the portion of $\scri^+$ lying between $\tilde{\Delta}_1$ and $\tilde{\Delta}_2$, which we denote as $\hat{\scri}^+$.
The corresponding symplectic current is then given by
\begin{equation}
    J\left(\delta_1, \delta_2\right)=-\frac{2 i}{\kappa} \delta_{[1} \Sigma_i^{+} \wedge \delta_{2]} A^{+i} \quad \forall \delta_1, \delta_2 \in T_p \Gamma.
    \end{equation}
The equations of motion imply $dJ=0$. 
This is a universal result that holds independently of the specific choice of theory.
A detailed derivation can is provided in appendix \ref{app:clo_J}.
Then, fig. \ref{fig:slice_for_asym_flat} and Stokes' theorem imply
\begin{equation}\label{eq:curr_stok}
    -\int_{\tilde M_1} \delta_{[1} \Sigma_i \wedge \delta_{2]} A_{+}^i+\int_{\tilde M_2} \delta_{[1} \Sigma_i \wedge \delta_{2]} A_{+}^i+\int_{\hat\scri^+} \delta_{[1} \Sigma_i^{+} \wedge \delta_{2]} A_{+}^i=0.
    \end{equation}
According to the definition of $\Sigma^+_i$ in eq. \eqref{eq:def_of_sig_+}, when it is pulled back to the time-slice $\tilde{M}$, only the spatial part $\Sigma_i$ remaines.
Eq. \eqref{eq:curr_stok} implies there locally exists a function $\mu$ on the phase space, such that
\beq \label{eq:varia_mu}
\begin{aligned}
    \delta \mu
    =&-i\int_{\tilde{M}_1} \Sigma_i \wedge \delta A^i_+
    +i\int_{\tilde{M}_2}  \Sigma_i \wedge \delta A^i_+
    +i\int_{\hat\scri^+} \Sigma_i^{+} \wedge \delta A_{+}^i\\
    =&-i\int_{\tilde{M}_1} \Sigma_i \wedge \delta (\Gamma^i+iK^i)
    +i\int_{\tilde{M}_2}  \Sigma_i \wedge \delta (\Gamma^i+iK^i)
    +i\int_{\hat\scri^+} \Sigma_i^{+} \wedge \delta A_{+}^i.
\end{aligned}
\eeq
Then we arrive at the following theorem:
\begin{theorem}
    The symplectic structure on a time-slice $\tilde{M}$ is
\begin{equation}\label{eq:proper_for_sym}
    \kappa \Omega_{\tilde{M}}\left(\delta_1, \delta_2\right)=\int_{\tilde{M}} \delta_{[1} \Sigma^i \wedge \delta_{2]} K_i,  
  \end{equation}
  which is independent on the choice of the time-slices.
\end{theorem}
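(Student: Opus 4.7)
The plan is to obtain $\Omega_{\tilde{M}}$ by antisymmetrizing the symplectic potential read off from \eqref{eq:varia_mu}, and then to show that the imaginary piece involving the spin connection $\Gamma^i$ drops out entirely, leaving only the real $K$ contribution displayed in the theorem. Time-slice independence will follow by the same mechanism, applied on $\hat{\scri}^+$ rather than on $\tilde{\Delta}$.

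First, I would identify the symplectic potential on a single slice $\tilde{M}$. Pulling back $\Sigma^{+i} = \Sigma^i + 2i\,e^0\wedge e^i$ to a spacelike slice annihilates the $e^0\wedge e^i$ piece in the adapted 3+1 tetrad, so \eqref{eq:varia_mu} identifies $\theta_{\tilde{M}}(\delta) = -\frac{i}{\kappa}\int_{\tilde{M}}\Sigma_i\wedge\delta A^{+i}$. Antisymmetrizing gives $\kappa\,\Omega_{\tilde{M}}(\delta_1,\delta_2) = -2i\int_{\tilde{M}}\delta_{[1}\Sigma^i\wedge\delta_{2]}A^{+}_i$. Splitting $A^{+}_i = \Gamma_i + iK_i$ then separates this into the desired real piece $\int_{\tilde{M}}\delta_{[1}\Sigma^i\wedge\delta_{2]}K_i$ plus an imaginary piece $-i\int_{\tilde{M}}\delta_{[1}\Sigma^i\wedge\delta_{2]}\Gamma_i$ that must be shown to vanish.

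Second, the vanishing of the $\Gamma$-piece would be established via the standard exactness identity: the torsion-free compatibility $d_{\Gamma}e^i = 0$, together with its variation, implies that the antisymmetrized combination $\delta_{[1}\Sigma^i\wedge\delta_{2]}\Gamma_i$ is $d$-exact on $\tilde{M}$ (schematically of the form $d(\delta_{[1}e^i\wedge\delta_{2]}e_i)$). By Stokes, this term reduces to boundary integrals on $\tilde{\Delta}$ and on the 2-sphere at spatial infinity; the latter is discarded under the standard asymptotic fall-off. On $\tilde{\Delta}$ the residual integrand is then controlled by the gauge fixing \eqref{eq:gauge_fixing}, the relation $\underset{\Leftarrow}{K^1}=0$ established in \eqref{eq:k_1_0}, Theorem \ref{thm:for_0}, and the vanishing property \eqref{eq:proper_for_static}, whose combined effect forces the boundary contribution to zero. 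For the independence of the time-slice choice, I would invoke \eqref{eq:curr_stok}: the closedness $dJ = 0$ of the symplectic current relates $\Omega_{\tilde{M}_1}$ and $\Omega_{\tilde{M}_2}$ by the flux $\int_{\hat{\scri}^+}\delta_{[1}\Sigma^{+}_i\wedge\delta_{2]}A^{+i}$ through the portion of $\scri^+$ between them; repeating the $\Gamma+iK$ split and applying the same boundary data at $\scri^+$ kills this flux, so $\Omega_{\tilde{M}_1} = \Omega_{\tilde{M}_2}$.

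The principal obstacle I anticipate is the verification of these boundary vanishings under only the WIH boundary conditions rather than the stronger IH ones. The remark following Theorem \ref{thm:thre_for_KK} explicitly notes that $\lie_{\vec{l}}c = 0$ is not available on $\scri^+$, so one cannot simply transplant the IH-case argument of \cite{perez2011static} verbatim. The closure will instead have to be driven by the extremal character of $\scri^+$ and the vanishing of $\boldsymbol{\omega}_{(\tilde{n})}$ in the divergence-free conformal frame, which together with the tetrad identities in Theorem \ref{thm:for_0} should permit the boundary contributions on both $\tilde{\Delta}$ and $\hat{\scri}^+$ to collapse.
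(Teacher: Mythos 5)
Your derivation goes wrong at its central step. After antisymmetrizing the single-slice potential and splitting $A^{+i}=\Gamma^i+iK^i$, you convert the $\Gamma$-piece via \eqref{eq:tran_delta_gama} into the boundary term $\int_{\tilde{\Delta}}\delta_{[1}e^i\wedge\delta_{2]}e_i$ and then assert that this term is ``forced to zero'' by the gauge fixing, $K^1_a\hat{=}0$ \eqref{eq:k_1_0}, Theorem \ref{thm:for_0} and \eqref{eq:proper_for_static}. That term does not vanish: it is precisely the boundary symplectic structure \eqref{eq:sym_wih_trad}, which Theorem \ref{th:syn_at_boun} identifies with the difference of two $SU(2)$ Chern--Simons symplectic structures --- the nontrivial structure the rest of the paper quantizes; if it vanished there would be nothing to quantize on $\tilde{\Delta}$. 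Even after restricting to the allowed variations, it is nonzero: for a gauge variation one has $\int_{\tilde{\Delta}}2\,\delta e^i\wedge\delta_\alpha e_i=-\int_{\tilde{\Delta}}\delta\Sigma_i\,\alpha^i$, which is generically nonvanishing. Likewise, the flux $\int_{\hat\scri^+}\delta_{[1}\Sigma^{+}_i\wedge\delta_{2]}A^{+i}$ is not killed ``by the same boundary data'': the earlier vanishing of $B(\delta)$ used $\alpha|_{\partial\scri^+}=v|_{\partial\scri^+}=0$ on all of $\scri^+$, whereas on the portion $\hat\scri^+$ the parameters do not vanish at $\tilde{\Delta}_1$ and $\tilde{\Delta}_2$, so integration by parts leaves surviving corner terms. (Also, $\partial\tilde M=\tilde{\Delta}$ in the conformally completed spacetime, so there is no separate sphere at spatial infinity to discard by fall-off.)

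The mechanism that actually proves the theorem is a statement about the difference between two slices, not about a single slice. Starting from \eqref{eq:varia_mu} and \eqref{eq:tran_delta_gama}, one isolates the combination $\mathscr{D}(\delta)=\int_{\tilde{\Delta}_1-\tilde{\Delta}_2}e_i\wedge\delta e^i-\int_{\hat\scri^+}\Sigma^{+}_i\wedge\delta A^{+i}$ and shows that this \emph{combination} vanishes for the allowed variations: for gauge transformations the $\tilde{\Delta}$ terms cancel the corner terms produced by integrating by parts along $\hat\scri^+$, leaving the Gauss constraint; for diffeomorphisms one uses the vector constraint, Cartan's first structure equation, \eqref{eq:iden_e_sig}, and $v\lrcorner K_i\Sigma^i\hat{=}0$ from \eqref{eq:k_1_0}, ending with $\int_{\tilde{\Delta}_1-\tilde{\Delta}_2}\Sigma_i\,v\lrcorner K^i=0$. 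Exactness of the resulting $\delta\bar\mu$ then gives $\int_{\tilde M_1}\delta_{[1}\Sigma_i\wedge\delta_{2]}K^i=\int_{\tilde M_2}\delta_{[1}\Sigma_i\wedge\delta_{2]}K^i$, i.e.\ \eqref{eq:proper_for_sym} together with slice independence, without ever claiming that the $e\wedge\delta e$ term or the $\scri^+$ flux vanishes separately. Your concern about WIH versus IH conditions is reasonable but beside the point: the only horizon-specific input needed is $K^1_a\hat{=}0$, which holds on $\scri^+$; the gap in your argument is the unjustified (and false) vanishing of the single-slice boundary contribution.
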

The proof of this theorem can be found in appendix \ref{app:detail}.

\subsection{The symplectic structutre of $\scri^+$}
In this subsection, we demonstrate that the dynamics on $\scri^+$ can be described by two Chern-Simons theories, in the sense of its symplectic structure.
We begin by rewriting the symplectic structure \eqref{eq:proper_for_sym} in terms of the Ashtekar-Barbero connection,
which is defined as (see etc. \cite{han2007fundamental,ashtekar2004background,rovelli2004quantum})
\begin{equation}
    A_a^i=\Gamma_a^i+\beta K_a^i,
    \end{equation}
where $\beta$ is the Ashtekar-Barbero-Immirzi parameter.
Then, the symplectic potential in eq. \eqref{eq:proper_for_sym} can be rewritten in terms of the  Ashtekar-Barbero variables as
\begin{equation}
    \begin{aligned}
        \kappa \Theta\left(\delta\right)
        =&\int_{\tilde{M}}\Sigma_i\wedge \delta K^i\\
        =&\frac{1}{\beta}\int_{\tilde{M}} \Sigma_i \wedge\delta
        \left(
            \Gamma^i
            +\beta K^i
        \right)
        -\frac{1}{\beta}\int_{\tilde{M}}\Sigma_i\wedge \delta  \Gamma^i\\
        =&\frac{1}{\beta}\int_{\tilde{M}}\Sigma_i \wedge\delta A^i
        +\frac{1}{\beta}\int_{\tilde{\Delta}}e_i\wedge \delta  e^i.\\
    \end{aligned}
\end{equation}
Here, we have used the identity for the spin connection introduced in \cite{thiemann2008modern}, which reads
\begin{equation}\label{eq:tran_delta_gama}
    \int_{\tilde{M}} \Sigma_i \wedge \delta \Gamma^i=-\int_{\tilde{\Delta}}e_i \wedge \delta e^i.
    \end{equation}
Consequently, the corresponding symplectic form reads
\begin{equation}\label{eq:sym_stru_ashte}
    \kappa \Omega_{\tilde M}\left(\delta_1, \delta_2\right)=\frac{1}{\beta} \int_{\tilde M} \delta_{[1} \Sigma^i \wedge \delta_{2]} A_i+\frac{1}{\beta} \int_{\tilde{\Delta}} \delta_{[1} e^i \wedge \delta_{2]} e_i,
    \end{equation}
where the boundary term defines the symplectic structure at a cross-section $\tilde{\Delta}$ of $\scri^+$
\begin{equation}\label{eq:sym_wih_trad}
    \kappa \Omega_{\tilde{\Delta}}\left(\delta_1, \delta_2\right)= \frac{1}{\beta} \int_{\tilde{\Delta}} \delta_{[1} e^i \wedge \delta_{2]} e_i.
\end{equation}
This boundary term is difficult to handle, as discussed in \cite{perez2011static}.
One approach to address this difficulty is to recast \eqref{eq:sym_wih_trad} into an equivalent formalism in  terms of two $SU(2)$ connections, defined as 
\begin{equation}
        A^i_\gamma :=\Gamma^i+\gamma e^i,\quad
        A^i_\sigma :=\Gamma^i+\sigma e^i.
\end{equation}
Here, $\gamma$ and $\sigma$ are two constants. 
With these definitions, eq. \eqref{eq:sym_wih_trad} can be rewritten in the following equivalent form
    \begin{equation}\label{eq:sym_with_chern_conne}
        \kappa \beta \Omega_{\tilde{\Delta}}\left(\delta_1, \delta_2\right)=\frac{1}{\gamma^2-\sigma^2} \int_{\tilde{\Delta}} \delta_{[1} A_\gamma^i \wedge \delta_{2]} A_{\gamma i}-\frac{1}{\gamma^2-\sigma^2} \int_{\tilde{\Delta}} \delta_{[1} A_\sigma^i \wedge \delta_{2]} A_{\sigma i}.
        \end{equation}
    To demonstrate this equivalence, we introduce the following lemma:
\begin{lemma}\label{lm:lemmar_for_curv}
    The curvatures of $A^i_\gamma$ and $A^i_\sigma$ take the following forms at $\scri^+$:
    \bea 
    \underset{\Leftarrow}{F^i}\left(A_\gamma\right)&=&\tilde\Psi_2 \underset{\Leftarrow}{\Sigma^i}+\frac{1}{2}\left(\gamma^2+c\right)  \underset{\Leftarrow}{\Sigma^i}\label{eq:cons_for_gamma},\\
    \underset{\Leftarrow}{F^i}\left(A_\sigma\right)&=&\tilde\Psi_2  \underset{\Leftarrow}{\Sigma^i}+\frac{1}{2}\left(\sigma^2+c\right)  \underset{\Leftarrow}{\Sigma^i}\label{eq:cons_for_sigma},
    \eea
    where $c$  is the scalar field introduced in theorem \ref{thm:thre_for_KK}.
\end{lemma}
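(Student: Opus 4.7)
The plan is a direct expansion of $F^i(A) := dA^i + \frac{1}{2}\epsilon^i{}_{jk}A^j\wedge A^k$ for each of the two connections, combined with the formula \eqref{eq:curvature_on_wih} for the curvature of the Ashtekar self-dual connection on a WIH. The self-dual connection $A_+^i = \Gamma^i + iK^i$ will serve as an algebraic bridge between $F^i(\Gamma)$ and $\tilde\Psi_2$.

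First I would expand $F^i(A_\gamma)$ using $A_\gamma^i = \Gamma^i + \gamma e^i$ with $\gamma$ constant. The cross term $\gamma(de^i + \epsilon^i{}_{jk}\Gamma^j\wedge e^k) = \gamma\, d_\Gamma e^i$ vanishes because the spin connection is torsion-free, while the quadratic term equals $\frac{\gamma^2}{2}\epsilon^i{}_{jk}e^j\wedge e^k = \frac{\gamma^2}{2}\Sigma^i$ by \eqref{eq:def_of_bi_vector}. This yields $F^i(A_\gamma) = F^i(\Gamma) + \frac{\gamma^2}{2}\Sigma^i$ throughout $\tilde{\mathscr{M}}$, not only on $\scri^+$. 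Performing the same expansion on $A_+$ gives $F^i(A_+) = F^i(\Gamma) + i\,d_\Gamma K^i - \frac{1}{2}\epsilon^i{}_{jk}K^j\wedge K^k$. Pulled back to $\tilde\Delta$, Theorem \ref{thm:thre_for_KK} turns the last term into $-\frac{c}{2}\underset{\Leftarrow}{\Sigma^i}$; on the left, the vacuum action \eqref{eq:action} enforces $\tilde R_{ab} = 0$, hence $\tilde R = 0$ and $\tilde\Phi_{11} = 0$, so \eqref{eq:curvature_on_wih} collapses to $\underset{\Leftarrow}{F^i}(A_+) = \tilde\Psi_2\,\underset{\Leftarrew}{\Sigma^i}$, i.e.\ $\underset{\Leftarrow}{F^i}(A_+) = \tilde\Psi_2\underset{\Leftarrow}{\Sigma^i}$.

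The final step is to split this identity into real and imaginary parts. Since $\Gamma$, $K$, $\Sigma^i$ and $c$ are all real, matching imaginary parts yields $\underset{\Leftarrow}{d_\Gamma K^i} = (\text{Im}\,\tilde\Psi_2)\,\underset{\Leftarrow}{\Sigma^i}$, while matching real parts gives $\underset{\Leftarrow}{F^i}(\Gamma) = (\text{Re}\,\tilde\Psi_2)\,\underset{\Leftarrow}{\Sigma^i} + \frac{c}{2}\underset{\Leftarrow}{\Sigma^i}$. The crucial input is the established fact $\text{Im}\,\tilde\Psi_2\,\hat{=}\,0$ on $\scri^+$ (end of section \ref{sec: scri_as_iso}), which both makes the imaginary equation consistent and promotes $\text{Re}\,\tilde\Psi_2$ to $\tilde\Psi_2$. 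Substituting the resulting expression for $\underset{\Leftarrow}{F^i}(\Gamma)$ into the first-step identity $F^i(A_\gamma) = F^i(\Gamma) + \frac{\gamma^2}{2}\Sigma^i$ produces \eqref{eq:cons_for_gamma}; the derivation of \eqref{eq:cons_for_sigma} is literally the same with $\gamma\to\sigma$. The main obstacle I foresee is the bookkeeping of real versus imaginary parts and checking that no hidden factors of $i$ are introduced through the complex connection $A_+$; otherwise the argument is purely algebraic given Theorems \ref{thm:for_0}–\ref{thm:thre_for_KK} and the vacuum hypothesis on the unphysical spacetime.
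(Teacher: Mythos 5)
Your proposal is correct and follows essentially the same route as the paper's own proof: expand $F^i(A_\gamma)$ and $F^i(A_+)$ around the spin connection, kill the cross term with Cartan's first structural equation, use the vacuum condition to reduce \eqref{eq:curvature_on_wih} to $\underset{\Leftarrow}{F^i}(A_+)=\tilde\Psi_2\,\underset{\Leftarrow}{\Sigma^i}$, discard the imaginary part via $\text{Im}\,\tilde\Psi_2\,\hat{=}\,0$, and convert $\epsilon_{ijk}K^j\wedge K^k$ into $c\,\underset{\Leftarrow}{\Sigma^i}$ by Theorem \ref{thm:thre_for_KK}. The only difference is ordering (you isolate $\underset{\Leftarrow}{F^i}(\Gamma)$ first and then add $\tfrac{\gamma^2}{2}\underset{\Leftarrow}{\Sigma^i}$, while the paper substitutes directly), which is immaterial.
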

The proof of this lemma can be found in \ref{app:detail}.
Using this result, we obtain the following theorem:
\begin{theorem}\label{th:syn_at_boun}
The symplectic structure \eqref{eq:sym_wih_trad} on $\tilde{\Delta}$ can be equivalent expressed in terms of the $SU(2)$ connections $A_\gamma$ and $A_\sigma$:
\begin{equation}
        \kappa \beta \Omega_{\tilde{\Delta}}\left(\delta_1, \delta_2\right)=\frac{1}{\gamma^2-\sigma^2} \int_{\tilde{\Delta}} \delta_{[1} A_\gamma^i \wedge \delta_{2]} A_{\gamma i}-\frac{1}{\gamma^2-\sigma^2} \int_{\tilde{\Delta}} \delta_{[1} A_\sigma^i \wedge \delta_{2]} A_{\sigma i}.
        \end{equation}
\end{theorem}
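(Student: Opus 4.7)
The plan is to substitute the definitions $A_\gamma^i=\Gamma^i+\gamma e^i$ and $A_\sigma^i=\Gamma^i+\sigma e^i$ into the proposed right-hand side and reduce it to the triad expression \eqref{eq:sym_wih_trad}. Writing $\delta A_\gamma^i=\delta\Gamma^i+\gamma\,\delta e^i$ (and similarly for $A_\sigma$), and using the identity $\delta_{[1}e^i\wedge\delta_{2]}\Gamma_i=\delta_{[1}\Gamma^i\wedge\delta_{2]}e_i$ that follows from the wedge antisymmetry of 1-forms together with the $[1,2]$-antisymmetrization, a direct expansion produces
\begin{equation*}
\delta_{[1}A_\gamma^i\wedge\delta_{2]}A_{\gamma i}-\delta_{[1}A_\sigma^i\wedge\delta_{2]}A_{\sigma i}
=2(\gamma-\sigma)\,\delta_{[1}\Gamma^i\wedge\delta_{2]}e_i+(\gamma^2-\sigma^2)\,\delta_{[1}e^i\wedge\delta_{2]}e_i.
\end{equation*}
After dividing by $\gamma^2-\sigma^2=(\gamma-\sigma)(\gamma+\sigma)$, the piece quadratic in $\delta e$ reproduces exactly $\kappa\beta\,\Omega_{\tilde{\Delta}}(\delta_1,\delta_2)$ via \eqref{eq:sym_wih_trad}, so the theorem is reduced to the single boundary identity
\begin{equation*}
\int_{\tilde{\Delta}}\delta_{[1}\Gamma^i\wedge\delta_{2]}e_i=0.
\end{equation*}

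Next I would establish this vanishing by exploiting the constraints obeyed by $\Gamma^i$ pulled back to $\tilde{\Delta}$. Under the gauge fixing \eqref{eq:gauge_fixing}, $\underset{\Leftarrow}{e^1}=0$, and the condition $\underset{\Leftarrow}{K^1}=0$ from \eqref{eq:k_1_0} together with Theorem \ref{thm:for_0} pins down $\underset{\Leftarrow}{\Gamma^2}$ and $\underset{\Leftarrow}{\Gamma^3}$ in terms of the tangential parts of $K^2$ and $K^3$. The torsion-free condition $de^i+\epsilon^i{}_{jk}\Gamma^j\wedge e^k=0$, pulled back to $\tilde{\Delta}$, identifies $\underset{\Leftarrow}{\Gamma^1}$ with the intrinsic Levi-Civita connection of the 2-sphere, a functional of $\underset{\Leftarrow}{e^2}$ and $\underset{\Leftarrow}{e^3}$ alone. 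Varying this equation yields the key linear relation $\epsilon^i{}_{jk}\,\delta\Gamma^j\wedge e^k=-d_\Gamma\delta e^i$, which allows one to trade $\delta\underset{\Leftarrow}{\Gamma^i}$ for $d_\Gamma\delta\underset{\Leftarrow}{e^i}$ modulo contributions proportional to $\underset{\Leftarrow}{\Sigma^2}$ and $\underset{\Leftarrow}{\Sigma^3}$, both of which vanish in our gauge since $\underset{\Leftarrow}{e^1}=0$. Substituting into the cross term and invoking Stokes' theorem on the closed 2-surface $\tilde{\Delta}$ eliminates the resulting exact differentials and yields the desired identity.

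The main obstacle is this second step: carefully tracking the extrinsic-curvature components of $\underset{\Leftarrow}{\delta\Gamma^{2,3}}$ and showing that they do not produce a surviving contribution on $\tilde{\Delta}$. The bulk identity \eqref{eq:tran_delta_gama}, $\int_{\tilde{M}}\Sigma_i\wedge\delta\Gamma^i=-\int_{\tilde{\Delta}}e_i\wedge\delta e^i$, converts three-dimensional $\delta\Gamma$ integrals into boundary triad integrals, but does not by itself live purely on $\tilde{\Delta}$; the crucial input that closes the argument is $\underset{\Leftarrow}{K^1}=0$, inherited from the vanishing of $\boldsymbol{\omega}_{(\tilde{n})}$ in the divergence-free frame, which kills precisely the $SU(2)$-algebra combination of $K$ and $e$ that would otherwise obstruct the cancellation. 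Once this identity is proved, the expansion in Step 1 shows that the boundary symplectic structure on the cross-section $\tilde{\Delta}$ of $\scri^+$ is exactly the difference of two $SU(2)$ Chern--Simons symplectic forms with opposite signs, as claimed.
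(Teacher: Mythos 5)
Your step 1 is correct and is a genuine repackaging: expanding $\delta A^i_{\gamma}=\delta\Gamma^i+\gamma\,\delta e^i$ and $\delta A^i_{\sigma}=\delta\Gamma^i+\sigma\,\delta e^i$, the $\delta\Gamma\wedge\delta\Gamma$ terms cancel and the theorem becomes equivalent to the single statement $\int_{\tilde{\Delta}}\delta_{[1}\Gamma^i\wedge\delta_{2]}e_i=0$ for the allowed variations. (The paper instead works with the difference of symplectic potentials $\Theta_0(\delta)$ and shows its exterior variation vanishes, but the two formulations are equivalent.)

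The gap is in your step 2, which is where all the content lies and which you yourself flag as the obstacle. First, $\int_{\tilde{\Delta}}\delta_{[1}\Gamma^i\wedge\delta_{2]}e_i=0$ is not an off-shell identity: the pullbacks of $\Gamma^2,\Gamma^3$ to $\tilde{\Delta}$ carry extrinsic (bulk) information that is not a functional of the pulled-back zweibein, so for unrestricted variations the cross term does not vanish. It holds only for the admissible variations of eq.~\eqref{eq:var_fix_WIH}, i.e. $\delta=\delta_\alpha+\delta_v$ with the transformation laws \eqref{eq:gaue_transform} and \eqref{eq:diffeomorphism}, combined with the boundary conditions; your sketch never invokes this restriction. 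Second, the proposed mechanism --- varying the torsion equation to ``trade $\delta\Gamma^i$ for $d_\Gamma\delta e^i$'' and then using Stokes' theorem --- does not go through on $\tilde{\Delta}$: the pulled-back relation $\epsilon^i{}_{jk}\,\delta\Gamma^j\wedge e^k=-d_\Gamma\delta e^i$ only constrains the combinations of $\delta\Gamma^j$ wedged with $e^2,e^3$ (and note that the pullback of $\delta e^1$ is generally nonzero under internal rotations, so even the $i=1$ component is not as clean as claimed); it leaves precisely the extrinsic components of the pulled-back $\delta\Gamma^{2},\delta\Gamma^{3}$ undetermined, which is exactly where the potential obstruction sits. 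In the paper's proof these are dealt with by splitting the variation into its gauge and diffeomorphism parts and using, respectively, the Gauss constraint together with Lemma \ref{lm:lemmar_for_curv} (which matches $\delta F(A_{\gamma,\sigma})$ against $\delta\Sigma$), and Cartan's first structure equation together with Theorem \ref{thm:thre_for_KK} and the staticity-type identity \eqref{eq:proper_for_static}, $v\lrcorner (K_i\Sigma^i)=0$, which follows from \eqref{eq:k_1_0}. Your sketch mentions $K^1=0$ but makes no substantive use of the curvature relations or of the restriction to $\delta_\alpha,\delta_v$, so the crucial cancellation is asserted rather than derived; as it stands the argument is incomplete.
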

The proof can be found in appendix \ref{app:detail}.

Theorem \ref{th:syn_at_boun} indicates that the dynamics on $\tilde{\Delta}$ can be described by two $SU(2)$ Chern-Simons theories with levels
\begin{equation}
k_\gamma=-k_\sigma=\frac{8 \pi}{\left(\gamma^2-\sigma^2\right) \kappa \beta},
\end{equation}
see also \cite{engle2010black,perez2011static}.
This equivalence allows us to apply the quantization scheme of Chern–Simons theories introduced in \cite{witten1989quantum} to quantize $\scri^+$, as will be discussed in the next section.

\section{Quantization and entropy of $\mathscr{I}^+$} \label{sec:quanti_scri}
In this section, we first quantize $\mathscr{I}^+$ with the quantization scheme of Chern-Simons theory.
 Based on this result, we compute its entropy by counting the microstates.

A time-slice $\tilde{M}$ of the unphysical spacetime can be decomposed into a bulk region and a boundary $\tilde{\Delta}$, where $\tilde{\Delta}$ is the intersection between $\tilde{M}$ and $\scri^+$.
Following  the framework of LQG \cite{thiemann2008modern,rovelli2004quantum,ashtekar2004background,han2007fundamental}, we denote the Hilbert space of the bulk part associated with a fixed graph $\gamma$ as $\mathcal{H}^B_{\gamma}$.
The flux operator acting on $p\in \gamma \cap\tilde{\Delta}$, which is the intersection between $\gamma$ and $p\in \gamma \cap\tilde{\Delta}$ is defined as 
\begin{equation}\label{eq:con_for_bulk}
\epsilon^{ab}\hat\Sigma^i_{ab}
=16\pi G\hbar\beta\sum_{p\in \gamma \cap\mathcal{I}^+}\delta(x_{\text{bulk}},x)\hat J_{\text{bulk}}^i(p),
\end{equation}
where $\hat J_{\text{bulk}}^i(p)$ is the $SU(2)$ generator on the bulk, satisfying $\left[\hat J_{\text{bulk}}^i(p),\hat J_{\text{bulk}}^k(p)\right]=\epsilon^{ij}_{\,\,\,k}\hat J_{\text{bulk}}^k(p)$.
The area of $\tilde\Delta$ operator is defined as 
\begin{equation}
    \hat{A}_{\tilde\Delta}:=8\pi G\hbar\beta\sum_{p=1}^n\sqrt{\hat{J}^j_p\hat{J}^j_p},
\end{equation}
where the sum is over all punctures $p\in\gamma\cap\tilde{\Delta}$, and $\hat{J}^j_p$ are the corresponding $SU(2)$ operators at each puncture.
One can show that the eigenstates of $\hat{A}_{\tilde\Delta}$  are the spin network states on $\scri^+$, denoted as $\left|\left\{j_p, m_p\right\}_1^n ; \cdots\right\rangle$, where $j_p$ and $m_p$ are the spin and magnetic numbers respectively at each puncture.
The area operator acts on these  eigenstates as
\begin{equation}\label{eq: eigen_area}
    \hat{\tilde{A}}_{\tilde{\Delta}}\left|\left\{j_p, m_p\right\}_1^n ; \cdots\right\rangle
    =8\pi G\hbar\beta\sum_{p=1}^n\sqrt{j_p(j_p+1)}\left|\left\{j_p, m_p\right\}_1^n ; \cdots\right\rangle.
\end{equation}
We adopt Witten's quantization scheme for Chern-Simons theory \cite{witten1989quantum}.
At each puncture, we impose the following two constraints:
\begin{equation}
   \frac{k_\gamma}{4\pi}\epsilon^{ab} F^i_{ab}\left(A_\gamma\right)=J^i_\gamma(p),\qquad
    \frac{k_\sigma}{4\pi}\epsilon^{ab} F^i_{ab}\left(A_\sigma\right)=J^i_\sigma(p). 
\end{equation}
Here, $k_\gamma$ and $k_\sigma$ are the levels for these two Chern-Simons theories respectively, given by
\begin{equation}
    k_\gamma=-k_\sigma=\frac{8\pi}{\left(\gamma^2-\sigma^2\right)\kappa\beta}.
\end{equation}   
Recall \eqref{eq:cons_for_gamma} and \eqref{eq:cons_for_sigma}, eq. \eqref{eq:con_for_bulk} yields
\begin{equation}\label{eq: j_bound_to_bulk}
    J^i_\gamma(p)=\frac{2\tilde\Psi_2+\gamma^2+c}{\gamma^2-\sigma^2} J^i_{\text{bulk}}(p),\quad
  J^i_\sigma(p)=-\frac{2\tilde\Psi_2+\sigma^2+c}{\gamma^2-\sigma^2} J^i_{\text{bulk}}(p).
\end{equation}
From \eqref{eq: j_bound_to_bulk}, we obtain two constraints
\begin{equation}
    D^i(p)=J^i_{\text{bulk}}(p)+J^i_\gamma(p)+J^i_{\sigma}(p)=0,
\end{equation}
\begin{equation}
    C^i(p)=J^i_\gamma(p)-J^i_{\sigma}(p)+\frac{4\tilde\Psi_2+2c+\sigma^2+\gamma^2}{\gamma^2-\sigma^2}J^i_{\text{bulk}}(p)=0.
\end{equation}
We assign the pair of spins $\left(j_\gamma,\,j_\sigma \right)$ to each puncture.
As demonstrated in \cite{perez2011static, witten1989quantum}, the Hilbert space of the Chern-Simons theory is the intertwiner space of q-deformed $SU(2)$, denoted as $U_q\left(su(2)\right)$.
Therefore, the Hilbert space on $\tilde{\Delta}$ is 
\begin{equation}\label{eq:cs_hil}
    \mathscr{H}_k^{C S}\left(j_1^\gamma \cdots j_n^\gamma\right) \otimes \mathscr{H}_k^{C S}\left(j_1^\sigma \cdots j_n^\sigma\right) \subset \operatorname{Inv}\left(j_1^\gamma \otimes \cdots \otimes j_n^\gamma\right) \otimes \operatorname{Inv}\left(j_1^\sigma \otimes \cdots \otimes j_n^\sigma\right).
    \end{equation}
Here, $\operatorname{Inv}\left(j_i\right)$ is isomorphic to classical $SU(2)$ intertwiner space. 
Eq.\eqref{eq:cs_hil} introduces a cut-off $|k|/2$ to the spin $j$, with $k$ is the level of the Chern-simons theory.
As the result, we quantize $\tilde{\Delta}$, as a spacelike cross-section of $\scri^+$, via the Chern-Simons quantization scheme, with the Hilbert space given by $\mathscr{H}_k^{C S}\left(j_1^\gamma \cdots j_n^\gamma\right) \otimes \mathscr{H}_k^{C S}\left(j_1^\sigma \cdots j_n^\sigma\right)$.

With the above Hilbert space, the entropy of $\tilde{\Delta}$ is computed by counting the number of microstates.
According to eq. \eqref{eq: eigen_area}, the eigenvalue of the area operator on $\tilde{\Delta}$ is 
\begin{equation}\label{eq:eign_area_2}
A_{\tilde{\Delta}}=8\pi G\hbar\beta\sum_p\sqrt{(j_\gamma+j_\sigma)(j_\gamma+j_\sigma+1)}.
\end{equation}
Let $s\left(j_\gamma,j_\sigma\right)$ denotes the number of punctures carrying spin $j_\gamma+j_\sigma$.
With this configuration, eq. \eqref{eq:eign_area_2} becomes
\begin{equation}\label{eq:eign_area_3}
    A_{\tilde{\Delta}}=8\pi G\hbar\beta\sum_{j_\gamma.j_\sigma}^{\frac{|k|}{2}}s_j\left(j_\gamma,j_\sigma\right)\sqrt{(j_\gamma+j_\sigma)(j_\gamma+j_\sigma+1)}.
    \end{equation}
The number of the microstates is given by
\begin{equation}
d=\frac{\left[\sum_{j^\gamma,j^\sigma}^{\frac{|k|}{2}}s\left(j_\gamma,j_\sigma\right)\right]!}{\prod_{j^\gamma,j^\sigma}^{\frac{|k|}{2}}s\left(j_\gamma,j_\sigma\right)!}
\prod_{j^\gamma,j^\sigma}^{\frac{|k|}{2}}
\left(\left(2j^\gamma+1\right)\left(2j^\gamma+1\right)\right)
^{s\left(j_\gamma,j_\sigma\right)}.
\end{equation}
To relate $A_{\tilde{\Delta}}$ to the entropy of $\tilde{\Delta}$, we introduce 
\begin{equation}\label{eq:rel_micro_area}
    \delta \ln d=\lambda \delta A_{\tilde{\Delta}},
\end{equation}
with $\lambda$ is a Lagrangian multiple.
Following ref. \cite{perez2011static,ghosh2006counting}, we assume that $s\left(j_\gamma,j_\sigma\right)\gg 1$.
Then by Stirling's approximation, we have 
\begin{equation}
    \begin{aligned}
        \ln d
        =&\ln\left[\left[\sum_{j^\gamma,j^\sigma}^{\frac{|k|}{2}}s\left(j_\gamma,j_\sigma\right)\right]!\right]
        -\sum_{j^\gamma,j^\sigma}^{\frac{|k|}{2}}\ln\left[s\left(j_\gamma,j_\sigma\right)!\right]
        +\sum_{j^\gamma,j^\sigma}^{\frac{|k|}{2}}s\left(j_\gamma,j_\sigma\right)\ln\left[\left(2j^\gamma+1\right)\left(2j^\gamma+1\right)\right]\\
        \approx &
        \sum_{j^\gamma,j^\sigma}^{\frac{|k|}{2}}s\left(j_\gamma,j_\sigma\right)\ln\left[\sum_{j^\gamma,j^\sigma}^{\frac{|k|}{2}}s\left(j_\gamma,j_\sigma\right)\right]
        -\sum_{j^\gamma,j^\sigma}^{\frac{|k|}{2}}s\left(j_\gamma,j_\sigma\right)\ln\left[s\left(j_\gamma,j_\sigma\right)\right]\\
        &+\sum_{j^\gamma,j^\sigma}^{\frac{|k|}{2}}s\left(j_\gamma,j_\sigma\right)\ln\left[\left(2j^\gamma+1\right)\left(2j^\gamma+1\right)\right].
    \end{aligned}
\end{equation}
Therefore, with \eqref{eq:eign_area_2} and \eqref{eq:rel_micro_area}, we find
\begin{equation}
    \frac{s\left(j^\gamma, j^\sigma\right)}{\sum_{j^\gamma, j^\sigma}^{\frac{|k|}{2}} s\left(j^\gamma, j^\sigma\right)}=\left(2 j^\gamma+1\right)\left(2 j^\sigma+1\right) e^{-\lambda 8 \pi \beta \ell_p^2 \sqrt{\left(j^\gamma+j^\sigma\right)\left(j^\gamma+j^\sigma+1\right)}}.
\end{equation}
Thus, the entropy of $\scri^+$ reads
\begin{equation}
   \begin{aligned}
        S=&\ln d\\
        \approx&\sum_{j^\gamma,j^\sigma}^{\frac{|k|}{2}}s\left(j_\gamma,j_\sigma\right)\ln\left[\sum_{j^\gamma,j^\sigma}^{\frac{|k|}{2}}s\left(j_\gamma,j_\sigma\right)\right]
        -\sum_{j^\gamma,j^\sigma}^{\frac{|k|}{2}}s\left(j_\gamma,j_\sigma\right)\ln\left[s\left(j_\gamma,j_\sigma\right)\right]\\
        &+\sum_{j^\gamma,j^\sigma}^{\frac{|k|}{2}}s\left(j_\gamma,j_\sigma\right)\ln\left[\left(2j^\gamma+1\right)\left(2j^\gamma+1\right)\right]\\
        =&\sum_{j^\gamma,j^\sigma}^{\frac{|k|}{2}}s\left(j_\gamma,j_\sigma\right)
            \ln\left[\frac{\sum_{j^\gamma,j^\sigma}^{\frac{|k|}{2}}s\left(j_\gamma,j_\sigma\right)}
            {s}\left(2j^\gamma+1\right)\left(2j^\gamma+1\right)\right]\\
            =&\lambda A_{\tilde{\Delta}}.
   \end{aligned} 
\end{equation}
We conclude that the entropy of $\scri^+$ is proportional to the area of $\tilde{\Delta}$, which is consistent with the result obtained in the context of WIH.
\section{Conclusions and outlooks}\label{sec:con_out}
In this work, we show that the null infinity of the asymptotically flat spacetime ($\scri^+$) can be dynamically described  by a pair of $SU(2)$ Chern-Simons theories, based on the recent developments by Ashtekar and Speziale.
These developments demonstrate that $\scri^+$ is equivalent to a WIH under the condition of a divergence-free conformal frame, $\tilde{\nabla}_a\tilde{n}^a=0$ \cite{ashtekar2024null,ashtekar2024charge}.
Based on these developments, we show that the symplectic structure on $\tilde{\Delta}$, as a spacelike cross-section of $\scri^+$, is equivalent to the symplectic structure of two $SU(2)$ Chern-Simons theories with opposite levels.
We then quantize $\scri^+$ using the Chern-Simons quantization scheme
and compute its entropy by counting the number of the microstates.

This work represents a first step toward developing a quantum theory of $\scri^+$. 
However, the discussions on the gravitational radiation and BMS charges are absent from the current treatment.
One possible reason is the absence of the matter fields in the action eq. \eqref{eq:action}.
Henceforth, incorporating matter fields, e.g., electronic-magnetic field, coupled to the gravitational field would be a meaningful direction in the further research. 
This may lead to a formulation of the symplectic structure associated with BMS charges and gravitational radiation.
As mentioned in \cite{strominger2018lectures},  the generator of supertranslations is directly related to the gravitational memory effect. Therefore, our approach may offer useful insights into this phenomenon.

Another promising direction for future research is exploring the relationship between $\scri^+$ and Carrollian holography.
As previously discussed, Carrollian holography is built upon the isomorphism between the BMS symmetry in the bulk and the conformal Carroll group on the boundary ($\scri^+$).
A deeper investigation into $\scri^+$ could thus shed light on the structure of Carrollian holography.
Moreover, since the asymptotically flat spacetime can be regarded as a limiting case of asymptotically AdS spacetimes as $\Lambda\to0$,
these investigations might also contribute to a better understanding of the AdS/CFT correspondence.
\section*{Acknowledgment}
HT acknowledges the valuable discussions with Muxin Han, Xiangdong Zhang, Jinbo Yang and Rong-zhen Guo.
The authors acknowledge the support from the Hunan Provincial Natural Science Foundation of China (Grant No. 2022JJ30220).
\appendix
\section{details}\label{app:detail}
In this appendix, we provide the proofs of several important lemmas and theorems presented in the main text.
\begin{theorem}
    Under the gauge fixing introduced in sec. \ref{sec:wih_as_su2}, we have
\end{theorem}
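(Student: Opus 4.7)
The plan is to extract the two equalities directly from the strengthened divergence-free condition $\tilde{\nabla}_a\tilde{n}^b\hat{=}0$, which was established in Sec.~\ref{sec: scri_as_iso} by combining the divergence-free conformal frame with the conformal form of Einstein's equations, together with the gauge fixing $\tilde{n}^a=\tfrac{1}{2}(e_0^{\,a}+e_1^{\,a})$ from eq.~(\ref{eq:gauge_fixing}). Once these two inputs are in place the claim reduces to a short tetrad-components calculation, and pulling the resulting identity back to $\hat{\Delta}$ is trivial because the equality already holds on all of $\scri^+$.

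First I would produce the tetrad-vector analogue of eq.~(\ref{eq:eq_for_ome}) by differentiating the duality relation $e^I_a e_J^{\,a}=\delta^I_J$ and substituting. This yields
\begin{equation}
    \tilde{\nabla}_a e_J^{\,b} \;=\; \omega^I_{aJ}\,e_I^{\,b},
\end{equation}
which expresses the covariant derivative of any tetrad vector in the tetrad basis. Inserting $\tilde{n}^b=\tfrac{1}{2}(e_0^{\,b}+e_1^{\,b})$ and imposing $\tilde{\nabla}_a\tilde{n}^b\hat{=}0$, the linear independence of $\{e_I^{\,b}\}$ forces $\omega^I_{a0}+\omega^I_{a1}\hat{=}0$ for every internal index $I$.

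The final step is index-raising. Using $\omega^I_{aJ}=\omega_a^{IK}\eta_{KJ}$ together with $\eta_{00}=-1$, $\eta_{11}=+1$, one gets $\omega^I_{a0}=-\omega_a^{I0}$ and $\omega^I_{a1}=\omega_a^{I1}$, so the above collapses to $\omega_a^{I1}\hat{=}\omega_a^{I0}$. Specialising to $I=2,3$ and pulling back to $\hat{\Delta}$ produces exactly the two asserted equalities.

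The main potential pitfall is not a hidden analytical step but the careful bookkeeping of internal-index positions and the sign contributed by $\eta_{00}=-1$; a single misplaced sign would reverse the claim. Apart from that, the theorem is essentially a restatement of $\tilde{\nabla}_a\tilde{n}^b\hat{=}0$ in tetrad components once the gauge choice for $e_0,e_1$ has been made.
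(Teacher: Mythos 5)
Your proposal is correct, but it reaches the result by a different input than the paper. The paper's proof only uses the NEH/WIH property of $\scri^+$, namely the vanishing of the expansion and shear, $\theta\hat{=}\tilde m^a\bar{\tilde m}^b\tilde\nabla_a\tilde n_b\hat{=}0$ and $\sigma\hat{=}\tilde m^a\tilde m^b\tilde\nabla_a\tilde n_b\hat{=}0$, and extracts from these two contractions precisely the components of $\omega^{21}-\omega^{20}$ and $\omega^{31}-\omega^{30}$ along $e_2,e_3$, i.e.\ exactly the pullback to $\hat\Delta$. You instead invoke the stronger, frame-specific fact $\tilde\nabla_a\tilde n^b\hat{=}0$ (property 2 of Sec.~\ref{sec: scri_as_iso}, valid in the divergence-free conformal frame via the conformal Einstein equations and the stress-energy falloff), combine it with $\tilde\nabla_a e_J^{\,b}=\omega^I_{aJ}e_I^{\,b}$ and the gauge choice $\tilde n^a=\tfrac12(e_0^{\,a}+e_1^{\,a})$, and conclude $\omega_a^{\,I1}\hat{=}\omega_a^{\,I0}$ for \emph{all} $I$ and all spacetime components on $\scri^+$; your index-raising signs ($\eta_{00}=-1$, $\eta_{11}=+1$) are handled correctly, and restricting to $I=2,3$ and pulling back gives the theorem. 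The trade-off: your argument is shorter and yields a stronger statement (it also gives $\omega^{01}\hat{=}0$, hence $K^1\hat{=}0$, which the paper derives separately afterwards), but it would not carry over to a generic WIH, where only $\theta$ and $\sigma$ vanish and $\tilde\nabla_a\tilde n_b$ need not; the paper's expansion/shear route is the one that survives in that more general setting. One minor point of care: your step relies on the exact identification $\tilde n^a=\tfrac12(e_0^{\,a}+e_1^{\,a})$ of eq.~\eqref{eq:gauge_fixing}; if the null normal were only proportional to $e_0^{\,a}+e_1^{\,a}$ with a non-constant factor, the extra $\partial_a(\text{factor})$ term would lie along the $e_0+e_1$ direction and so would still not affect the $I=2,3$ components, so the conclusion is robust either way.
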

\begin{equation}
\underset{\Leftarrow}{\boldsymbol{\omega}}^{21}=\underset{\Leftarrow}{\boldsymbol{\omega}}^{20},\quad
\underset{\Leftarrow}{\boldsymbol{\omega}}^{31}=\underset{\Leftarrow}{\boldsymbol{\omega}}^{30}
\end{equation}
\begin{proof}
    The expansion $\theta$ and the shear $\sigma$ of $\scri^+$ are given by
    \begin{equation}
        \theta\hat{=}\tilde m^a\bar{\tilde m}^b\tilde\nabla_a\tilde n_b,\quad
        \sigma\hat{=}\tilde m^a\tilde m^b\tilde\nabla_a\tilde n_b.
    \end{equation}

    Note that both $\rho$ and $\sigma$ vanish on the $\scri^+$, since $\scri^+$ is a WIH.
    Using \eqref{eq:def_of_connection}, we have
    \begin{equation}
        \begin{aligned}
        0 & \hat=\theta \hat=\frac{1}{2 \sqrt{2}}\tilde m^a\left(e_2^b-i e_3^b\right)\tilde \nabla_a\left(e_b^1- e_b^0\right)\\
        & \hat=\frac{1}{2 \sqrt{2}} \tilde m^a\left(\left(\omega_a^{21}-\omega_a^{20}\right)-i\left(\omega_a^{31}-\omega_a^{30}\right)\right).
        \end{aligned}
        \end{equation}
    Similarly, 
    \begin{equation}
        \begin{aligned}
        0 & \hat=\sigma \hat=\frac{1}{2 \sqrt{2}}\tilde m^a\left(e_2^b-i e_3^b\right)\tilde \nabla_a\left(e_b^1- e_b^0\right)\\
        & \hat=\frac{1}{2 \sqrt{2}}\tilde m^a\left(\left(\omega_a^{21}-\omega_a^{20}\right)-i\left(\omega_a^{31}-\omega_a^{30}\right)\right).
        \end{aligned}
        \end{equation}
    As a result, we conclude the following results
    \begin{equation}
        \underset{\Leftarrow}{\boldsymbol\omega}^{21}=\underset{\Leftarrow}{\boldsymbol\omega}^{20},\quad
        \underset{\Leftarrow}{\boldsymbol\omega}^{31}=\underset{\Leftarrow}{\boldsymbol\omega}^{30}.
        \end{equation}
\end{proof}
\begin{theorem}
    On $\scri^+$, we have
\begin{equation}\label{eq:thre_for_KK}
    \underset{\Leftarrow}{K}^j \wedge \underset{\Leftarrow}{K}^k \epsilon_{i j k}=c \underset{\Leftarrow}{\Sigma}^i,
    \end{equation}
    with $c$ is a scalar  field on $\hat\Delta$.
\end{theorem}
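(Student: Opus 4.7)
The plan is to verify the identity component by component in the internal index $i$, exploiting the gauge fixing introduced around eq. \eqref{eq:gauge_fixing} together with the vanishing $\underset{\Leftarrow}{K_a}^1 = 0$ established in eq. \eqref{eq:k_1_0}, and then conclude by a dimensional count on the 2-manifold $\hat\Delta$. The statement is essentially a pointwise linear-algebra assertion once the correct pullbacks are identified.

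First I would work out the pullbacks of the basic objects. Because $e_2^a$ and $e_3^a$ span the tangent space of $\hat\Delta$ under the chosen gauge (with $e_1^a$ normal to $\hat\Delta$ and $e_0^a$ transverse to $\scri^+$ in the NP split), the co-tetrad 1-forms satisfy $\underset{\Leftarrow}{e^0} = 0 = \underset{\Leftarrow}{e^1}$ on $\hat\Delta$. Hence, from $\Sigma^i = \epsilon^i_{\ jk} e^j \wedge e^k$, only the $i=1$ component survives pullback, namely $\underset{\Leftarrow}{\Sigma^1} = 2\,\underset{\Leftarrow}{e^2} \wedge \underset{\Leftarrow}{e^3}$, while $\underset{\Leftarrow}{\Sigma^2} = \underset{\Leftarrow}{\Sigma^3} = 0$.

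Next I would expand the left-hand side: $\underset{\Leftarrow}{K^j} \wedge \underset{\Leftarrow}{K^k}\,\epsilon_{ijk}$ equals $2\,\underset{\Leftarrow}{K^2} \wedge \underset{\Leftarrow}{K^3}$ for $i=1$, $2\,\underset{\Leftarrow}{K^3} \wedge \underset{\Leftarrow}{K^1}$ for $i=2$, and $2\,\underset{\Leftarrow}{K^1} \wedge \underset{\Leftarrow}{K^2}$ for $i=3$. Eq. \eqref{eq:k_1_0} immediately forces the $i=2$ and $i=3$ components to vanish, in agreement with the vanishing of $\underset{\Leftarrow}{\Sigma^i}$ for those values, so the claimed identity holds trivially with any $c$ at these indices. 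For $i=1$, both $\underset{\Leftarrow}{K^2} \wedge \underset{\Leftarrew}{K^3}$ and $\underset{\Leftarrow}{e^2} \wedge \underset{\Leftarrow}{e^3}$ are 2-forms on the 2-dimensional manifold $\hat\Delta$; since $\Lambda^2 T_p^* \hat\Delta$ is one-dimensional at each point, they must be pointwise proportional. This pointwise proportionality factor is precisely the scalar function $c$ on $\hat\Delta$ asserted by the theorem, completing the argument uniformly in $i$.

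I do not anticipate any genuine obstacle; the only subtlety worth checking is that the divergence-free condition $\boldsymbol{\omega}_{(\tilde n)} = 0$, which holds as a 1-form on the 3-dimensional hypersurface $\scri^+$, still implies $\underset{\Leftarrow}{K^1} = 0$ after further restriction to the 2-dimensional cross-section $\hat\Delta$. This is automatic, since the further pullback of a vanishing form is again vanishing. Note that, in contrast to the IH case discussed in the remark, one does \emph{not} try to conclude anything about the Lie derivative $\mathcal{L}_{\vec{\tilde n}} c$ from this argument; the scalar $c$ is merely a function on $\hat\Delta$, and no stronger structural statement is claimed or needed.
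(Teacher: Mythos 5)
Your proposal is correct and takes essentially the same route as the paper's proof: the gauge fixing forces $\underset{\Leftarrow}{\Sigma}^2=\underset{\Leftarrow}{\Sigma}^3=0$, the condition \eqref{eq:k_1_0} kills the $i=2,3$ components of the left-hand side, and only the $i=1$ component requires an argument. The sole (inessential) difference is that the paper establishes the $i=1$ proportionality by expanding $\underset{\Leftarrow}{K}^A=M^A{}_B\,\underset{\Leftarrow}{e}^B$ and reading off $c=\det M$, whereas you invoke the one-dimensionality of $\Lambda^2 T_p^*\hat\Delta$; the two observations are equivalent.
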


\begin{proof}
    To simplify the notations, all equations below are pulled back to $\scri^+$, and we suppress the "$\Leftarrow$" symbol.
    Since $e_1^{\,a}$ is normal to $\tilde\Delta$ by the gauge fixing,  $\Sigma^A=0$, where $A=2,\,3$.
    Therefore, only the $i=1$ component in eq. \eqref{eq:thre_for_KK} is non-trivial.
     Hence, it suffices to demonstrate
    \begin{equation}\label{eq:KK_1}
        K^A \wedge K^B \epsilon_{AB}=c\Sigma^1,
    \end{equation}
where the indices $A$ and $B$ are angular and take the values in $\{2,\,3\}$, and $\epsilon_{AB}=\epsilon_{1AB}$.
With the tetrad introduced in Sec. \ref{sec:wih_as_su2}, $K^A$ can be expanded as
\begin{equation}
    K^A=M^A_{\,B}e^B,
\end{equation}
with $M^A_{\,B}$ is the matrix of coefficients. Therefore, 
\begin{equation}
    \begin{aligned}
        &K^A\wedge K^B\epsilon_{AB}\\
    =&M^A_{\,C}M^B_{\,D}e^C\wedge e^D \epsilon_{AB}\\
    =&\det(M)e^A\wedge e^B\epsilon_{AB}.
    \end{aligned}
\end{equation}
Comparing this with eq. \eqref{eq:KK_1}, we have
\begin{equation}
    c=\det{M}.
\end{equation}
\end{proof}
To continue, we introduce the following lemma:
\begin{lemma}\label{lamma:2-1form_wedge}
    Given $\boldsymbol{A}$ a 2-form and $\boldsymbol{B}$ an 1-formon a 2-manifold, we have
     \begin{equation}
       \boldsymbol{A} v\lrcorner \boldsymbol{B}=-v\lrcorner (\boldsymbol{A})\wedge \boldsymbol{B},
    \end{equation}
    with $v$ a vector field tangents to this 2-manifold.
\end{lemma}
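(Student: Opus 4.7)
The plan is to derive this identity purely from the graded Leibniz rule for the interior product $\lrcorner$ (stated in the footnote of the main text), combined with the observation that any $3$-form on a $2$-manifold must vanish identically for dimensional reasons. This is a very short proof, and I do not expect a substantive obstacle; the only subtlety is keeping track of the sign coming from the Leibniz rule and the reordering of a scalar factor in the final rewriting.

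First, I would observe that since $\boldsymbol{A}$ has degree $2$ and $\boldsymbol{B}$ has degree $1$, the wedge $\boldsymbol{A}\wedge\boldsymbol{B}$ is a $3$-form on a $2$-manifold, hence vanishes identically. Next, I would apply $v\lrcorner$ to this vanishing identity and use the Leibniz rule with $p=\deg\boldsymbol{A}=2$:
$$0 \;=\; v\lrcorner(\boldsymbol{A}\wedge\boldsymbol{B}) \;=\; (v\lrcorner\boldsymbol{A})\wedge\boldsymbol{B} \;+\; (-1)^{2}\,\boldsymbol{A}\wedge(v\lrcorner\boldsymbol{B}).$$
Finally, since $v\lrcorner\boldsymbol{B}$ is a $0$-form (a scalar function), I can pull it out of the wedge and identify $\boldsymbol{A}\wedge(v\lrcorner\boldsymbol{B}) = \boldsymbol{A}\,(v\lrcorner\boldsymbol{B})$. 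Rearranging the displayed equation then yields
$$\boldsymbol{A}\,(v\lrcorner\boldsymbol{B}) \;=\; -\,(v\lrcorner\boldsymbol{A})\wedge\boldsymbol{B},$$
which is the claimed identity.

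The hypothesis that $v$ be tangent to the $2$-manifold is used only to make $v\lrcorner$ well-defined intrinsically on the surface (so that all wedge products live on the same $2$-dimensional space and the dimensional vanishing applies). No other input is required; the lemma is essentially a bookkeeping consequence of Leibniz together with top-degree vanishing.
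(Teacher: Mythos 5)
Your proof is correct and is essentially identical to the paper's: both note that $\boldsymbol{A}\wedge\boldsymbol{B}$ vanishes as a $3$-form on a $2$-manifold, apply $v\lrcorner$ with the graded Leibniz rule (sign $(-1)^2=+1$ since $\deg\boldsymbol{A}=2$), and rearrange. No differences worth noting.
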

\begin{proof}
    On the 2-manifold, we have $\boldsymbol{A}\wedge \boldsymbol{B}=0$, Therefore
    \begin{equation}
        0=v\lrcorner(\boldsymbol{A}\wedge \boldsymbol{B})
        =\boldsymbol{A} v\lrcorner \boldsymbol{B}
        +v\lrcorner (\boldsymbol{A})\wedge \boldsymbol{B}.
    \end{equation}
\end{proof}
Then we have
\begin{theorem}
    The symplectic structure on a time-slice $\tilde{M}$ is
\begin{equation}
    \kappa \Omega_{\tilde{M}}\left(\delta_1, \delta_2\right)=\int_{\tilde{M}} \delta_{[1} \Sigma^i \wedge \delta_{2]} K_i,  
  \end{equation}
  which is independent on the choice of the time-slices.
\end{theorem}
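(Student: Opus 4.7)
The strategy is to start from the conservation identity \eqref{eq:curr_stok}, which follows from $dJ=0$ (shown in appendix \ref{app:clo_J}) together with Stokes' theorem applied to the region bounded by $\tilde{M}_1$, $\tilde{M}_2$ and $\hat{\scri}^+$. The goal is to project this complex identity onto its imaginary part, which will isolate the real-valued combination $\delta_{[1}\Sigma^i \wedge \delta_{2]}K_i$ on each spatial slice, and then to show that the boundary contribution on $\hat{\scri}^+$ drops out for the allowed variations of \eqref{eq:var_fix_WIH}.

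First I would adopt the time gauge, in which $e^0$ is hypersurface-orthogonal to the spatial slices so that the pullback of $e^0$ to any $\tilde{M}$ vanishes. Using $A^{+i}=\Gamma^i+iK^i$ and $\Sigma^{+i}=\Sigma^i+2i\,e^0\wedge e^i$, the pullback of $\Sigma^{+i}$ to $\tilde{M}$ reduces to $\Sigma^i$, and one obtains
\begin{equation}
\delta_{[1}\Sigma_i \wedge \delta_{2]} A^{+i}
= \delta_{[1}\Sigma_i \wedge \delta_{2]}\Gamma^i + i\,\delta_{[1}\Sigma_i \wedge \delta_{2]} K^i
\end{equation}
on $\tilde{M}$. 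Taking the imaginary part of \eqref{eq:curr_stok} then yields
\begin{equation}
-\int_{\tilde{M}_1}\delta_{[1}\Sigma_i \wedge \delta_{2]}K^i
+\int_{\tilde{M}_2}\delta_{[1}\Sigma_i \wedge \delta_{2]}K^i
+\mathrm{Im}\!\int_{\hat{\scri}^+}\delta_{[1}\Sigma^{+}_i\wedge\delta_{2]}A^{+i}=0,
\end{equation}
so the target formula, together with its time-slice independence, follows immediately once the boundary integral on $\hat{\scri}^+$ is shown to vanish.

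The crux is therefore the proof that $\mathrm{Im}\!\int_{\hat{\scri}^+}\delta_{[1}\Sigma^{+}_i\wedge\delta_{2]}A^{+i}=0$ for the admissible variations $\delta=\delta_\alpha+\delta_v$. This is the main obstacle, and it is structurally a two-variation version of the single-variation calculations already performed for $B(\delta_\alpha)$ and $B(\delta_v)$ in the paper. I would split the double variation into gauge-gauge, gauge-diffeomorphism and diffeomorphism-diffeomorphism pieces, substitute the explicit actions \eqref{eq:gaue_transform} and \eqref{eq:diffeomorphism} on $\Sigma^+$ and $A^+$, and then integrate by parts. The equation of motion $d_{A^+}\Sigma^+_i=0$ from \eqref{eq: eq_of_mo}, the Bianchi identity $d_{A^+}F^i(A^+)=0$, and the boundary conditions $\alpha|_{\partial\hat{\scri}^+}=v|_{\partial\hat{\scri}^+}=0$ kill all bulk and boundary-of-the-boundary contributions. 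The WIH structure on $\scri^+$ -- in particular $\boldsymbol{\omega}_{(\tilde{n})}=0$, $\mathrm{Im}\tilde{\Psi}_2=0$, the gauge-fixing $\underset{\Leftarrow}{K_a}{}^1=0$ of \eqref{eq:k_1_0}, and the contraction identity \eqref{eq:proper_for_static} -- is precisely what forces the residual terms to be purely real, so that the imaginary-part projection annihilates them.

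Once this vanishing is established, the two bulk integrals are equal, which simultaneously proves that $\kappa\,\Omega_{\tilde M}(\delta_1,\delta_2)=\int_{\tilde M}\delta_{[1}\Sigma^i\wedge\delta_{2]}K_i$ and that this expression does not depend on the chosen time-slice. I expect the technical heart of the proof to lie not in the bulk algebraic rearrangement but in the bookkeeping of the $\mathrm{Im}$ projection on $\hat{\scri}^+$: specifically, showing that the $\delta\Gamma^i$ contributions on $\scri^+$ recombine into an exact form whose boundary integral vanishes through the fall-off of $\alpha$ and $v$ at $\partial\hat{\scri}^+$ is the delicate step where the full WIH geometry of $\scri^+$ must be exploited.
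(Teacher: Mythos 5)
Your overall route---projecting the conserved-current identity \eqref{eq:curr_stok} onto its imaginary part so that the $\Gamma$-part of $A^{+}$ drops out of the bulk integrals, and then arguing that the $\hat{\scri}^+$ contribution is real---is a legitimate variant of the paper's argument, and it would let you bypass the identity \eqref{eq:tran_delta_gama} that the paper uses to trade $\int_{\tilde M}\Sigma_i\wedge\delta\Gamma^i$ for cross-section terms. However, as written there is a genuine gap. You impose $\alpha|_{\partial\hat{\scri}^+}=v|_{\partial\hat{\scri}^+}=0$, i.e.\ vanishing of the gauge parameter and the diffeomorphism vector field on the cross-sections $\tilde{\Delta}_1,\tilde{\Delta}_2$. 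The paper only requires $\alpha|_{\partial\scri^+}=v|_{\partial\scri^+}=0$; the cross-sections are interior to $\scri^+$, and the admissible variations \eqref{eq:var_fix_WIH} are generically nonzero there (nontrivial data on $\tilde{\Delta}$ is exactly what produces the boundary symplectic structure in theorem \ref{th:syn_at_boun}). If you really restrict to variations vanishing on $\tilde{\Delta}_{1,2}$, the whole $\hat{\scri}^+$ term vanishes trivially, but then the theorem is proved only for a subclass of tangent vectors, which is not the statement. If instead you keep the correct, weaker fall-off, integrating by parts on $\hat{\scri}^+$ leaves surface terms on $\tilde{\Delta}_1$ and $\tilde{\Delta}_2$ that do not vanish, and showing that their imaginary parts cancel is precisely the nontrivial content that your sketch leaves as an assertion (``the WIH structure forces the residual terms to be purely real'').

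That assertion is where all the work is. In the paper's computation the analogous residual piece for the diffeomorphism part is $-i\int_{\tilde{\Delta}_1-\tilde{\Delta}_2}\Sigma_i\, v\lrcorner K^i$ (see \eqref{eq:diff_sym_mid}), which is purely imaginary and is killed only by the $\scri^+$-specific input $K^1_a\hat{=}0$ of \eqref{eq:k_1_0}, equivalently \eqref{eq:proper_for_static}, together with Cartan's first structural equation, the Gauss constraint, and the identity \eqref{eq:iden_e_sig}; the Bianchi identity and the equations of motion alone do not make the surviving cross-section terms real. So the reality of the $\hat{\scri}^+$ contribution is not a formality: to complete your version you would have to carry out essentially the same manipulations as the paper's proof that $\mathscr{D}(\delta)=0$ (gauge part via the Gauss constraint, diffeomorphism part via the vector constraint, Cartan's equation and $K^1=0$), now applied to the antisymmetrized, imaginary-part-projected integrand and with the boundary conditions imposed only at $\partial\scri^+$.
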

\begin{proof}
To prove this theorem, we first demonstrate that there exists a function $\bar{\mu}$ on the phase space, such that
  \beq 
  \delta \mu=\delta \bar{\mu},
  \eeq
  where $\delta \mu$ is defined by eq. \eqref{eq:varia_mu} and $\delta\bar{\mu}$ takes the form:
  \begin{equation}
    \delta \tilde{\mu}=\int_{\tilde{M}_1} \Sigma_i \wedge \delta K^i-\int_{\tilde{M}_2} \Sigma_i \wedge \delta K^i.
    \end{equation}

From eq. \eqref{eq:varia_mu} and \eqref{eq:tran_delta_gama}, we find 
\begin{equation}
    \delta \mu-i \mathscr{D}(\delta)=\int_{\tilde M_1} \Sigma_i \wedge \delta K^i-\int_{\tilde M_2} \Sigma_i \wedge \delta K^i,
    \end{equation}
where
\begin{equation}
    \mathscr{D}(\delta)=\int_{\tilde{\Delta}_1-\tilde{\Delta}_2} e_i \wedge \delta e^i-\int_{\hat\scri^+} \Sigma_i^{+} \wedge \delta A^{+i}.
    \end{equation}
The $\mathscr{D}(\delta)$ can also be decomposed into a gauge transformation part and a diffeomorphism part ss $\mathscr{D}(\delta)=\mathscr{D}(\delta_\alpha)+\mathscr{D}(\delta_v)$.
Using \eqref{eq:gaue_transform}, we find the gauge transform part vanishes:
\begin{equation} 
    \begin{aligned}
        \mathscr{D}(\delta_\alpha)
        =&\int_{\tilde{\Delta}_1-\tilde{\Delta}_2}e_i\wedge\epsilon^i_{\,jk}\alpha^je^k
        +\int_{\hat\scri^+} \Sigma_i^{+} \wedge d_{A_{+}}\alpha^i\\
        =&\int_{\tilde{\Delta}_1-\tilde{\Delta}_2}e_i\wedge\epsilon^i_{\,jk}\alpha^je^k
        +\int_{\hat\scri^+} \Sigma_i^{+} \wedge\lt(d\alpha^i+\epsilon^i_{\,jk}A_{+}^j\alpha^k\rt)\\
        =&-\int_{\tilde{\Delta}_1-\tilde{\Delta}_2}\Sigma_i\alpha^i
        +\int_{\tilde{\Delta}_1-\tilde{\Delta}_2} \Sigma_i\alpha^i
        -\int_{\hat\scri^+}(d\lt(\Sigma^+_i\rt)\alpha^i
        -\epsilon_{ijk}\Sigma^{+i}\wedge A^{j}_+\alpha^k)\\
        =&-\int_{\scri^+}d_{A_+}\lt(\Sigma^+_i\rt)\alpha^i\\
        =&0.
    \end{aligned}
\end{equation}
Here, we have imposed the Gauss constraint in the last line.
According to \eqref{eq:proper_for_static}, \eqref{eq:diffeomorphism},  and  
lemma \ref{lamma:2-1form_wedge}, the diffeomorphism part vanishes:
\beq \label{eq:diff_sym_mid}
\begin{aligned}
    \mathscr{D}(\delta_v)
    =&\int_{\tilde{\Delta}_1-\tilde{\Delta}_2}
    e_i\wedge
    \left(
        v\lrcorner de^i
        +d\left(v\lrcorner e^i\right)
    \right)
    -\int_{\hat\scri^+}
    \Sigma_i^{+} \wedge
    \left(
        v\lrcorner F^i\left(A^+\right)
        +d_{A_+}\left(v\lrcorner A^{+i}\right)
    \right)\\
=&
    -\int_{\tilde{\Delta}_1-\tilde{\Delta}_2}
    v\lrcorner de^i\wedge e_i
    +\int_{\tilde{\Delta}_1-\tilde{\Delta}_2}e_i\wedge d\left(v\lrcorner e^i\right)
-\int_{\hat\scri^+}\Sigma_i^{+} \wedge d_{A_+}\left(v\lrcorner A^{+i}\right)\\
=&
2\int_{\tilde{\Delta}_1-\tilde{\Delta}_2}
v\lrcorner e_i de^i
-\int_{\tilde{\Delta}_1-\tilde{\Delta}_2}d\left(e_iv\lrcorner e^i\right)
-\int_{\hat\scri^+}d\left(\Sigma_i^{+}  v\lrcorner A^{+i}\right)
    +\int_{\hat\scri^+}d_{A_+}\left(\Sigma_i^{+}\right)  v\lrcorner A^{+i}\\
=&
-2\int_{\tilde{\Delta}_1-\tilde{\Delta}_2}
\epsilon_{ijk}v\lrcorner e_j e^k\wedge \Gamma^i
-\int_{\hat\scri^+}d\left(\Sigma_i^{+}  v\lrcorner A^{+i}\right)\\
    =&\int_{\tilde{\Delta}_1-\tilde{\Delta}_2}\Sigma_i  v\lrcorner \Gamma^{i}
    -\int_{\tilde{\Delta}_1-\tilde{\Delta}_2}\Sigma_i  v\lrcorner A^{+i}\\
    =&-i\int_{\tilde{\Delta}_1-\tilde{\Delta}_2}\Sigma_i  v\lrcorner K^{i}\\
    =&0.
\end{aligned}
\eeq
In  the second line, we have used the vector constraint $\Sigma_i\wedge\lrcorner\left[v\lrcorner F^i\lt(A^+\rt)\right]=0$ \cite{engle2010black}.
In the fourth line, we have employed  Cartan’s first structural equation $d_\Gamma e^{\,i}:=de^{\,i}+\epsilon^i_{\,jk}\Gamma^je^{k}=0$.
In the fifth line, we have used the following identity
\begin{equation}\label{eq:iden_e_sig}
    \begin{aligned}
        v\lrcorner\Sigma_i=&\epsilon_{ijk}v\lrcorner\lt(e^j\wedge e^k\rt)\\
        =&\epsilon_{ijk}v^b
        \lt(e_b^je_a^k
        -e_a^je_b^k
        \rt)\\
        =&2\epsilon_{ijk}v^b
        e_b^je_a^k\\
        =&2\epsilon_{ijk}v\lrcorner
        e^je^k
    \end{aligned}
\end{equation}
because $\delta_{[1}\wedge \delta_{2]}\bar{\mu}=0$, it follows that
\beq 
\int_{\tilde{M}_1}\delta_{[1}  \Sigma_i \wedge \delta_{2]}  K^i
=\int_{\tilde{M}_2} \delta_{[1}  \Sigma_i \wedge \delta_{2]}  K^i.
\eeq  
\end{proof}
\begin{lemma}
    The curvatures of $A^i_\gamma$ and $A^i_\sigma$ take the following forms at $\scri^+$:
    \bea 
    \underset{\Leftarrow}{F^i}\left(A_\gamma\right)&=&\tilde\Psi_2 \underset{\Leftarrow}{\Sigma^i}+\frac{1}{2}\left(\gamma^2+c\right)  \underset{\Leftarrow}{\Sigma^i},\\
    \underset{\Leftarrow}{F^i}\left(A_\sigma\right)&=&\tilde\Psi_2  \underset{\Leftarrow}{\Sigma^i}+\frac{1}{2}\left(\sigma^2+c\right)  \underset{\Leftarrow}{\Sigma^i},
    \eea
    where $c$ is the scalar field introduced in theorem \ref{thm:thre_for_KK}.
\end{lemma}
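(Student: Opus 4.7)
The plan is to derive the two formulae by first writing $F^{i}(A_{\gamma})$ and $F^{i}(A_{\sigma})$ in terms of the spin connection curvature $F^{i}(\Gamma)$ via an algebraic identity that is valid everywhere, and then evaluating $F^{i}(\Gamma)$ on $\scri^{+}$ through the self-dual Ashtekar connection $A^{+}=\Gamma+iK$ together with the already-established expression for $F^{i}(A^{+})$.

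First I would expand the Cartan structure equation for a connection of the form $A^{i}=\Gamma^{i}+\lambda e^{i}$, with $\lambda\in\{\gamma,\sigma\}$. A direct computation gives
\begin{equation*}
F^{i}(A)=F^{i}(\Gamma)+\lambda\bigl(de^{i}+\epsilon^{i}{}_{jk}\Gamma^{j}\wedge e^{k}\bigr)+\tfrac{\lambda^{2}}{2}\,\epsilon^{i}{}_{jk}\,e^{j}\wedge e^{k},
\end{equation*}
where the cross-term $\Gamma^{j}\wedge e^{k}$ receives two equal contributions from the two products in the Lie-algebra bracket after relabelling dummy indices. By Cartan's first structural equation $de^{i}+\epsilon^{i}{}_{jk}\Gamma^{j}\wedge e^{k}=0$, the middle term vanishes, yielding the universal identity $F^{i}(A)=F^{i}(\Gamma)+\tfrac{\lambda^{2}}{2}\Sigma^{i}$.

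Next I would fix $F^{i}(\Gamma)$ on $\scri^{+}$. Writing $A^{+}=\Gamma+iK$ and expanding again, one obtains the complex identity
\begin{equation*}
F^{i}(A^{+})=F^{i}(\Gamma)+i\,d_{\Gamma}K^{i}-\tfrac{1}{2}\,\epsilon^{i}{}_{jk}K^{j}\wedge K^{k}.
\end{equation*}
The right-hand side of the vacuum action \eqref{eq:action} gives $\tilde R_{ab}=0=\tilde R$, hence $\tilde\Phi_{11}=0$ and $\tilde R/24=0$. Using eq. \eqref{eq:curvature_on_wih} the pull-back becomes $\underset{\Leftarrow}{F^{i}}(A^{+})=\tilde\Psi_{2}\underset{\Leftarrow}{\Sigma^{i}}$. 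Moreover, the divergence-free conformal frame implies $\boldsymbol{\omega}_{(\tilde n)}=0$, so $\mathrm{Im}\,\tilde\Psi_{2}=\tilde D_{[a}\omega_{(\tilde n)b]}=0$; thus $\tilde\Psi_{2}$ is real on $\scri^{+}$. Splitting the complex identity above into real and imaginary parts (pulled back to $\scri^{+}$) yields simultaneously $\underset{\Leftarrow}{d_{\Gamma}K^{i}}=0$ and
\begin{equation*}
\underset{\Leftarrow}{F^{i}}(\Gamma)=\tilde\Psi_{2}\underset{\Leftarrow}{\Sigma^{i}}+\tfrac{1}{2}\underset{\Leftarrow}{\epsilon^{i}{}_{jk}K^{j}\wedge K^{k}}.
\end{equation*}
Theorem \ref{thm:thre_for_KK} replaces the quadratic $K$--term by $c\underset{\Leftarrow}{\Sigma^{i}}$, giving $\underset{\Leftarrow}{F^{i}}(\Gamma)=\tilde\Psi_{2}\underset{\Leftarrow}{\Sigma^{i}}+\tfrac{c}{2}\underset{\Leftarrow}{\Sigma^{i}}$. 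Inserting this into the algebraic identity from the first step, with $\lambda=\gamma$ and $\lambda=\sigma$ respectively, produces the two formulae claimed in the lemma.

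The main obstacle is bookkeeping: one must verify carefully that the cross-term in the Cartan expansion vanishes exactly (i.e.\ that the two contributions combine into $\lambda(de^{i}+\epsilon^{i}{}_{jk}\Gamma^{j}\wedge e^{k})$ with the correct overall factor), and one must keep track of the sign conventions when separating the real and imaginary parts of $F^{i}(A^{+})$. There is also the mild subtlety that one should justify invoking $\tilde R=0$ and $\tilde R_{ab}=0$ only \emph{at} $\scri^{+}$ (which is what is needed here), since elsewhere the conformally rescaled Einstein equations carry $\Omega^{-1}$ terms; the assumption that $\tilde\nabla_{a}\tilde n_{b}=O(\Omega^{2})$, already used in the main text, ensures that these correction terms drop out in the limit and the vacuum form of the equations survives on $\scri^{+}$.
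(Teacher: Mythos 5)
Your proposal is correct and follows essentially the same route as the paper: vacuum field equations plus eq.~\eqref{eq:curvature_on_wih} give $\underset{\Leftarrow}{F^i}(A^+)=\tilde\Psi_2\underset{\Leftarrow}{\Sigma^i}$, the vanishing of $\boldsymbol{\omega}_{(\tilde n)}$ (hence $\mathrm{Im}\,\tilde\Psi_2=0$) isolates the real part $F^i(\Gamma)-\tfrac{1}{2}\epsilon^i{}_{jk}K^j\wedge K^k=\tilde\Psi_2\Sigma^i$, Theorem~\ref{thm:thre_for_KK} replaces the $K\wedge K$ term by $c\,\Sigma^i$, and Cartan's first structure equation reduces $F^i(A_\gamma)$, $F^i(A_\sigma)$ to the claimed forms. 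Your only organizational difference — first establishing the identity $F^i(\Gamma+\lambda e)=F^i(\Gamma)+\tfrac{\lambda^2}{2}\Sigma^i$ and then fixing $F^i(\Gamma)$ — is just a repackaging of the same computation.
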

\begin{proof}
    In this proof, all quantities are pulled back $\scri^+$, and for simplicity, we omit the symbols $\Leftarrow$ in the notations.
    Throughout this paper, we considering a vacuum in the unphysical spacetime, where both the Ricci tensor and the Ricci scalar vanish. 
    Therefore, from eq. \eqref{eq:curvature_on_wih}, we obtain
    \begin{equation}
       F_{ab}^i(A^+)
        =\tilde\Psi_2\Sigma^i_{ab}.
    \end{equation}
   Thus, using eq. \eqref{eq:curva_forsel_dual}, we have
    \begin{equation}\label{eq:mid_for_psi2}
        \begin{aligned}
           \tilde \Psi_2\Sigma^i_{ab}
=&F^i_{ab}(A^+)
=dA_{a}^{+i}
+\frac{1}{2}\epsilon_{ijk}A_a^{+j}\wedge A_b^{+k}\\
=&d
(
    \Gamma_a^i
    +iK_a^i
)
+\frac{1}{2}\epsilon_{ijk}
(
    \Gamma_a^j
    +iK_a^j
)
\wedge 
(
    \Gamma_b^j
    +iK_b^k
)\\                    
=&
d\Gamma_a^i
+idK_a^i
+\frac{1}{2}\epsilon_{ijk}
\Gamma_a^j
\wedge 
\Gamma_b^j
+\frac{1}{2}\epsilon_{ijk}
iK_a^j
\wedge 
\Gamma_b^j
+\frac{1}{2}i\epsilon_{ijk}
\Gamma_a^j
\wedge 
K_b^k
-\frac{1}{2}\epsilon_{ijk}
K_a^j
\wedge 
K_b^k.
        \end{aligned}
    \end{equation}
Note that the one form field $\boldsymbol{\o}_{(\tilde{n})}=0$ indicates that $\im \Psi_2=0$. Hence, the imaginary part of eq. \eqref{eq:mid_for_psi2} vanishes, which leads to
\begin{equation}
    \begin{aligned}
        \tilde\Psi_2\Sigma^i_{ab}
=&d\Gamma_a^i
+\frac{1}{2}\epsilon_{ijk}
\Gamma_a^j
\wedge 
\Gamma_b^j
-\frac{1}{2}\epsilon_{ijk}
K_a^j
\wedge 
K_b^k
\\
=&d\Gamma_a^i
+\frac{1}{2}\epsilon_{ijk}
\Gamma_a^j
\wedge 
\Gamma_b^j
-\frac{1}{2}c\Sigma^i_{ab},
    \end{aligned}
\end{equation}
where we have used theorem \ref{thm:thre_for_KK}. 
Using Cartan's first structural equation, we find
\begin{equation}
    \begin{aligned}
        F^i(A_\gamma)
=&dA^i_\gamma
+\frac{1}{2}\epsilon_{ijk}
A^j_\gamma
\wedge
A^k_\gamma\\
=&
d
(
    \Gamma^i
    +\gamma e^i
)
+\frac{1}{2}\epsilon_{ijk}
(
    \Gamma^j
    +\gamma e^j
)
\wedge
(
    \Gamma^k
    +\gamma e^k
)\\
=&
d\Gamma^i
+\gamma d_\Gamma e^i
+\frac{1}{2}\epsilon_{ijk}
\Gamma^j
\wedge
\Gamma^k
+\frac{1}{2}\gamma^2\epsilon_{ijk}
 e^j
\wedge
 e^k
\\
=&
\tilde\Psi_2\Sigma^i
+\frac{1}{2}
(
    c
    +\gamma^2
)
\Sigma^i
    \end{aligned}
\end{equation}
With similar arguments for $A_\sigma^i$, we find 
\begin{equation}
    F^i\left(A_\sigma\right)=\tilde\Psi_2 \Sigma^i+\frac{1}{2}\left(c+\sigma^2\right) \Sigma^i\label{eq:curv_forAsigma}.
\end{equation}
\end{proof}
Then we have the following Theorem:
\begin{theorem}
The  symplectic structure \eqref{eq:sym_wih_trad} on $\tilde{\Delta}$ can be equivalent expressed in terms of the $SU(2)$ connections $A_\gamma$ and $A_\sigma$:
\begin{equation}\label{eq:sym_with_chern_conne_2}
        \kappa \beta \Omega_{\tilde{\Delta}}\left(\delta_1, \delta_2\right)=\frac{1}{\gamma^2-\sigma^2} \int_{\tilde{\Delta}} \delta_{[1} A_\gamma^i \wedge \delta_{2]} A_{\gamma i}-\frac{1}{\gamma^2-\sigma^2} \int_{\tilde{\Delta}} \delta_{[1} A_\sigma^i \wedge \delta_{2]} A_{\sigma i}.
        \end{equation}
\end{theorem}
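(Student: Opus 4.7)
The plan is to reduce the right-hand side to the left-hand side by algebraic expansion, with one boundary identity playing the role of the crucial step. I substitute the definitions $A_\gamma^i = \Gamma^i + \gamma e^i$ and $A_\sigma^i = \Gamma^i + \sigma e^i$ into the two Chern--Simons pieces. Writing $\delta A_\gamma^i = \delta\Gamma^i + \gamma\,\delta e^i$ and expanding the wedge product, I use the Euclidean internal metric $\delta^{ij}$ together with the 1-form antisymmetry $\alpha\wedge\beta=-\beta\wedge\alpha$ to derive the algebraic identity
\[ \delta_{[1}\Gamma^i\wedge\delta_{2]}e_i \;=\; \delta_{[1}e^i\wedge\delta_{2]}\Gamma_i, \]
so that the cross terms combine into a single piece. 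This gives
\[ \delta_{[1}A_\gamma^i\wedge\delta_{2]}A_{\gamma i} \;=\; \delta_{[1}\Gamma^i\wedge\delta_{2]}\Gamma_i \;+\; 2\gamma\,\delta_{[1}\Gamma^i\wedge\delta_{2]}e_i \;+\; \gamma^2\,\delta_{[1}e^i\wedge\delta_{2]}e_i, \]
and similarly with $\gamma\to\sigma$.

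Subtracting the two expansions and dividing by $\gamma^2-\sigma^2=(\gamma-\sigma)(\gamma+\sigma)$, the $\delta\Gamma\wedge\delta\Gamma$ pieces cancel identically, the $\delta e\wedge\delta e$ pieces combine to reproduce precisely $\int_{\tilde\Delta}\delta_{[1}e^i\wedge\delta_{2]}e_i$, and a residual cross contribution of the form $\tfrac{2}{\gamma+\sigma}\int_{\tilde\Delta}\delta_{[1}\Gamma^i\wedge\delta_{2]}e_i$ survives. Comparing with eq.~\eqref{eq:sym_wih_trad}, the theorem is thus equivalent to establishing the single identity
\[ \int_{\tilde\Delta}\delta_{[1}\Gamma^i\wedge\delta_{2]}e_i \;=\; 0. \]

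Proving this vanishing is the main obstacle. My plan is three-fold. First, I invoke the gauge fixing $\underset{\Leftarrow}{e^1}=0$ on $\tilde\Delta$ (from $e_1^a$ being normal to $\tilde\Delta$), together with the WIH condition $\underset{\Leftarrow}{K^1}=0$ of eq.~\eqref{eq:k_1_0}; admissible variations preserve the gauge, so $\underset{\Leftarrow}{\delta e^1}=0$ and the sum over $i$ collapses to the tangent internal directions $A=2,3$. Second, I vary Cartan's first structural equation $de^i+\epsilon^i{}_{jk}\Gamma^j\wedge e^k=0$ to obtain $\epsilon^i{}_{jk}\,\delta\Gamma^j\wedge e^k=-d_{\Gamma}\delta e^i$, which expresses the relevant components of $\delta\Gamma^A$ in terms of $d\delta e^B$ and $\Gamma\wedge\delta e$ on $\tilde\Delta$. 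Third, I substitute this into the cross integral and integrate by parts: since $\tilde\Delta\cong S^2$ is closed, all exact 2-form pieces drop out by Stokes' theorem, while the remaining algebraic pieces are manifestly symmetric in $\delta_1,\delta_2$ and therefore vanish under antisymmetrization.

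The hard part will be the bookkeeping in step three, namely rewriting $\delta_{[1}\Gamma^i\wedge\delta_{2]}e_i$ as the sum of an exact 2-form on $\tilde\Delta$ and a term symmetric in $\delta_1,\delta_2$. This is essentially a two-variation, boundary analogue of the bulk identity \eqref{eq:tran_delta_gama}, so I expect the same reasoning that gives \eqref{eq:tran_delta_gama} on $\tilde M$ to apply here, with the closed 2-sphere $\tilde\Delta$ playing the role previously played by $\tilde M$ (no further boundary contributions arise). If the residual algebraic pieces do not cancel by pure antisymmetry alone, I anticipate that theorem~\ref{thm:thre_for_KK} (the relation $\underset{\Leftarrow}{K}^j\wedge\underset{\Leftarrow}{K}^k\epsilon_{ijk}=c\,\underset{\Leftarrow}{\Sigma}^i$) and lemma~\ref{lm:lemmar_for_curv} (the explicit form of $F^i(A_\gamma),F^i(A_\sigma)$ on $\scri^+$) will supply the needed constraints on $\Gamma$ and $\delta\Gamma$ to close the argument.
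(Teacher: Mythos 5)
Your algebraic reduction is correct and usefully isolates the content of the theorem: expanding $A_\gamma^i=\Gamma^i+\gamma e^i$, $A_\sigma^i=\Gamma^i+\sigma e^i$ and subtracting, the statement is indeed equivalent to
\begin{equation*}
\int_{\tilde\Delta}\delta_{[1}\Gamma^i\wedge\delta_{2]}e_i=0
\end{equation*}
for the \emph{allowed} variations, which is just another way of writing the paper's condition $\mathfrak{d}\Theta_0(\delta_1,\delta_2)=0$. However, everything you defer to "step three" is the actual theorem, and the route you sketch for it does not go through as stated. The cross term is \emph{not} zero for arbitrary variations of independent boundary data; it vanishes only because the variations are restricted to internal gauge transformations \eqref{eq:gaue_transform} and diffeomorphisms \eqref{eq:diffeomorphism} preserving the $\scri^+$ boundary conditions, and because those boundary conditions (lemma \ref{lm:lemmar_for_curv} for $F(A_\gamma),F(A_\sigma)$, theorem \ref{thm:thre_for_KK}, and the identity \eqref{eq:proper_for_static} following from $\underset{\Leftarrow}{K^1}=0$) can be invoked. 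This is exactly the machinery the paper's proof uses, case by case in $\delta_\alpha$ and $\delta_v$; your plan only "anticipates" that these inputs might be needed, so the essential argument is missing rather than merely compressed.

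Two specific steps of your plan are also incorrect or insufficient as written. First, admissible gauge variations do not satisfy $\underset{\Leftarrow}{\delta e^1}=0$: from \eqref{eq:gaue_transform}, $\delta_\alpha e^1=\epsilon^1{}_{jk}\alpha^j e^k$, whose pullback to $\tilde\Delta$ is generically nonzero for gauge parameters mixing the internal normal and tangential directions (the paper's gauge computation never assumes $\alpha^2=\alpha^3=0$; it instead uses the curvature lemma). Second, the pullback $\underset{\Leftarrow}{\Gamma^i}$ is not determined by the pullback of the co-triad alone: $\Gamma^i$ is the spin connection of the full triad on $\tilde M$, and its tangential components on $\tilde\Delta$ carry extrinsic information, so varying Cartan's first structural equation intrinsically on $\tilde\Delta$ cannot express $\delta\Gamma$ in terms of $\delta e$ there. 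Consequently the hoped-for "two-variation boundary analogue of \eqref{eq:tran_delta_gama}" is not a purely intrinsic integration-by-parts identity on the closed sphere; without importing the $\scri^+$ constraints and the restriction to $\delta_\alpha,\delta_v$ (i.e.\ essentially reproducing the paper's proof of theorem \ref{th:syn_at_boun}), the argument does not close.
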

\begin{proof}
    To prove eq. \eqref{eq:sym_with_chern_conne_2},
    It is sufficient to show that the following symplectic potential is closed:
    \begin{equation}
        \Theta_0(\delta) \equiv \int_{\tilde{\Delta}} e_i \wedge \delta e^i-\frac{1}{\gamma^2-\sigma^2} \int_{\tilde{\Delta}}\left(A_{\gamma i} \wedge \delta A_\gamma^i-A_{\sigma i} \wedge \delta A_\sigma^i\right),
        \end{equation}
  which implies 
    \begin{equation}
        \int_{\tilde{\Delta}}\delta e_i \wedge \delta e^i
        =\frac{1}{\gamma^2-\sigma^2} \int_{\tilde{\Delta}}\left(\delta A_{\gamma i} \wedge \delta A_\gamma^i-\delta A_{\sigma i} \wedge \delta A_\sigma^i\right).
    \end{equation}
We define the exterior variation of the potential as
    \begin{equation}
        \mathfrak{d} \Theta_0\left(\delta_1, \delta_2\right):=\delta_1\left(\Theta_0\left(\delta_2\right)\right)-\delta_2\left(\Theta_0\left(\delta_1\right)\right).
        \end{equation}
        With the gauge transformations given in eq. \eqref{eq:gaue_transform}, we find 
        \begin{equation}
            \begin{aligned}
                &\mathfrak{d}\Theta_0\left(\delta, \delta_\alpha\right)\\
                =&\int_{\tilde{\Delta}}2\delta e^i\wedge\delta_\alpha e_i
                -\frac{2}{\gamma^2-\sigma^2}\int_{\tilde{\Delta}}
                \left(
                    \delta A_{\gamma i}\wedge\delta_\alpha A^i_\gamma
                    -\delta A_{\sigma i}\wedge\delta_\alpha A^i_\sigma
                \right)\\
                =&\int_{\tilde{\Delta}}2\delta e^i\wedge\epsilon_{ijk}\alpha^je^k
                +\frac{2}{\gamma^2-\sigma^2}\int_{\tilde{\Delta}}
                \left(
                    \delta A_{\gamma i}\wedge 
                    \left(
                        d\alpha^i
                        +\epsilon_{ijk} A_{\gamma j}\alpha^k
                    \right)
                    -\delta A_{\sigma i}\wedge 
                    \left(
                        d\alpha^i
                        +\epsilon_{ijk} A_{\sigma j}\alpha^k
                    \right)
                    \right)\\
                    =&\int_{\tilde{\Delta}}2\delta e^i\wedge\epsilon_{ijk}\alpha^je^k
                    +\frac{2}{\gamma^2-\sigma^2}\int_{\tilde{\Delta}}
                    \left(
                        \delta d\left(A_{\gamma i}\right) \wedge \alpha^i
                        +\frac{1}{2}\epsilon_{ijk} \delta (A_{\gamma i}\wedge 
                        A_{\gamma j})\alpha^k\right)\\
                    &-\frac{2}{\gamma^2-\sigma^2}\int_{\tilde{\Delta}}
                    \left(
                        \delta d\left(A_{\sigma i}\right) \alpha^i
                        +\frac{1}{2}\epsilon_{ijk} \delta (A_{\sigma i}\wedge 
                        A_{\sigma j})\alpha^k
                \right)\\
                =&\int_{\tilde{\Delta}}-\delta \Sigma^i\alpha_i
                +\frac{2}{\gamma^2-\sigma^2}\int_{\tilde{\Delta}}\left(
                    \delta F\left(A_\gamma\right)\alpha^i
                    -\delta F\left(A_\sigma\right) \alpha^i
                \right).
            \end{aligned}
        \end{equation}
        Then using lemma \ref{lm:lemmar_for_curv}, we obtain 
        \begin{equation}
            \mathfrak{d}\Theta_0\left(\delta, \delta_\alpha\right)=0.
        \end{equation}
Using the diffeomorphism given in eq. \eqref{eq:diffeomorphism}, we have
\begin{equation}\label{eq:boun_sym_in_conne}
    \mathfrak{d}\Theta_0\left(\delta, \delta_v\right)
                =\int_{\tilde{\Delta}}2\delta e^i\wedge\delta_v e_i
                -\frac{2}{\gamma^2
                -\sigma^2}\int_{\tilde{\Delta}}
                \left(
                    \delta A_{\gamma i}\wedge\delta_v A^i_\gamma
                    -\delta A_{\sigma i}\wedge\delta_v A^i_\sigma
                \right).
\end{equation}
With eq. \eqref{eq:iden_e_sig} and Cartan’s first structural equation,  we find  the  contribution of the tetrads in eq. \eqref{eq:boun_sym_in_conne} as
\begin{equation}
    \begin{aligned}
        &2\int_{\tilde{\Delta}}\delta e_i\wedge \delta_v e_i\\
=&2\int_{\tilde{\Delta}}\delta e_i\wedge
\lt(
    v\lrcorner de_i
    +d\lt(v\lrcorner e_i\rt)
\rt)\\
=&
2\int_{\tilde{\Delta}}v\lrcorner\delta e_i\wedge
 de_i
+2\int_{\tilde{\Delta}}\delta de_i\wedge
v\lrcorner e_i\\
=&2\int_{\tilde{\Delta}}\delta \lt(de_i\wedge
v\lrcorner e_i\rt)\\
=&-2\int_{\tilde{\Delta}}\delta \lt(
    \epsilon_{ijk}
\Gamma^j\wedge e^k    \wedge
v\lrcorner e_i\rt)\\
=&2\int_{\tilde{\Delta}}\delta \lt(
    \epsilon_{ijk}
\Gamma^i\wedge e^k    \wedge
v\lrcorner e_j\rt)\\
=&\int_{\tilde{\Delta}}\delta \lt(\Gamma^i
    \wedge  
    v\lrcorner 
    \Sigma^i\rt).
    \end{aligned}
\end{equation}
By Cartan's magic equation,  the diffeomorphism generated by the tangent vector $v$ on the connection $A^i_\sigma$ is given by
\begin{equation}
    \begin{aligned}
        \delta_v A_\gamma^i:=\mathcal{L}_{\vec{v}}A_\gamma^i
=&d\left(v\lrcorner A^i_\gamma\right)
+v\lrcorner dA^i_\gamma\\
=&d\left(v\lrcorner A^i_\gamma\right)
+v\lrcorner F^i(A_\gamma)
-\frac{1}{2}\epsilon_{ijk}v\lrcorner\left(A^j_\gamma\wedge A^k_\gamma\right)\\
=&d_{A_\gamma}\left(v\lrcorner A^i_\gamma\right)
+v\lrcorner F^i\lt(A_\gamma\rt).
    \end{aligned}
\end{equation}
Similarly, we have
\begin{equation}
    \delta_v A_\sigma^i
    =d_{A_\sigma}\lt(v\lrcorner A^i_\sigma\rt)
    +v\lrcorner F^i\lt(A_\sigma\rt).
\end{equation}
Then we find
\begin{equation}\label{eq:_sym_mid}
    \begin{aligned}
       & \int_{\tilde{\Delta}}\delta A_{\gamma i}
\wedge 
\delta_v A_\gamma^i
-\int_{\tilde{\Delta}}\delta A_{\sigma i}
\wedge 
\delta_v A_\sigma^i\\
=&
\int_{\tilde{\Delta}}\delta A_{\gamma i}
\wedge  
v\lrcorner F^i\lt(A_\gamma\rt)
+\int_{\tilde{\Delta}}\delta A_{\gamma i}
\wedge  
d_{A_{\gamma}}\lt(v\lrcorner A^i_\gamma\rt)\\
&-\int_{\tilde{\Delta}}\delta A_{\sigma i}
\wedge  
v\lrcorner F^i(A_\sigma)
-\int_{\tilde{\Delta}}\delta A_{\sigma i}
\wedge  
d_{A_{\sigma}}(v\lrcorner A^i_\sigma)\\
=&
\int_{\tilde{\Delta}}\delta 
(
    \Gamma^i
    +\gamma e^i
)
\wedge  
v\lrcorner F^i(A_\gamma)
+\int_{\tilde{\Delta}}\delta A_{\gamma i}
\wedge  
d_{A_{\gamma}}(v\lrcorner A^i_\gamma)\\
&-\int_{\tilde{\Delta}}\delta 
(
    \Gamma^i
    +\sigma e^i
)
\wedge  
v\lrcorner F^i(A_\sigma)
-\int_{\tilde{\Delta}}\delta A_{\sigma i}
\wedge  
d_{A_{\sigma}}(v\lrcorner A^i_\sigma)\\
=&
\int_{\tilde{\Delta}}\delta \Gamma^i
\wedge  
v\lrcorner 
(
    F^i(A_\gamma)
    -F^i(A_\sigma)
)
+ 
\int_{\tilde{\Delta}}\gamma\delta e^i
\wedge  
v\lrcorner F^i(A_\gamma)
+\int_{\tilde{\Delta}}\delta A_{\gamma i}
\wedge  
d_{A_{\gamma}}(v\lrcorner A^i_\gamma)\\
&-\int_{\tilde{\Delta}}\sigma\delta e^i
\wedge  
v\lrcorner F^i(A_\sigma)
-\int_{\tilde{\Delta}}\delta A_{\sigma i}
\wedge  
d_{A_{\sigma}}(v\lrcorner A^i_\sigma)
\\
=&\int_{\tilde{\Delta}}
\delta \Gamma^i
\wedge  
v\lrcorner 
(
    F^i(A_\gamma)
    -F^i(A_\sigma)
)
+ 
\int_{\tilde{\Delta}}\gamma\delta e^i
\wedge  
v\lrcorner F^i(A_\gamma)
+\int_{\tilde{\Delta}}\delta F_i\lt(A_{\gamma }\rt)
v\lrcorner A^i_\gamma\\
&-\int_{\tilde{\Delta}}\sigma\delta e^i
\wedge  
v\lrcorner F^i(A_\sigma)
-\int_{\tilde{\Delta}}\delta F_i(A_{\sigma}) 
v\lrcorner A^i_\sigma\\
=&\int_{\hat{\Delta}}
\delta \Gamma^i
\wedge  
v\lrcorner 
(
    F^i(A_\gamma)
    -F^i(A_\sigma)
)
+ 
\int_{\tilde{\Delta}}\gamma\delta e^i
\wedge  
v\lrcorner F^i(A_\gamma)
+\int_{\tilde{\Delta}}\delta F_i\lt(A_{\gamma }\rt)
v\lrcorner \left(\Gamma^i+\gamma e^i\right)\\
&-\int_{\tilde{\Delta}}\sigma\delta e^i
\wedge  
v\lrcorner F^i(A_\sigma)
-\int_{\tilde{\Delta}}\delta F_i(A_{\sigma}) 
v\lrcorner \left(\Gamma^i+\sigma e^i\right)\\
=&\int_{\tilde{\Delta}}
\delta \Gamma^i
\wedge  
v\lrcorner 
(
    F^i(A_\gamma)
    -F^i(A_\sigma)
)
+ 
\int_{\tilde{\Delta}}\gamma\delta\lt( e^i
\wedge  
v\lrcorner F^i(A_\gamma)\rt)
+\int_{\tilde{\Delta}}\delta F_i\lt(A_{\gamma }\rt)
v\lrcorner \Gamma^i\\
&-\int_{\tilde{\Delta}}\sigma\delta\lt( e^i
\wedge  
v\lrcorner F^i(A_\sigma)\rt)
-\int_{\tilde{\Delta}}\delta F_i(A_{\sigma}) 
v\lrcorner \Gamma^i.
    \end{aligned}
\end{equation}
In the last line, we have applied lamma \ref{lamma:2-1form_wedge}.
Then according to lemma \ref{lm:lemmar_for_curv}, we have 
\begin{eqnarray}
    e^i\wedge v\lrcorner F(A_{\gamma })_i
    =
    (
        \Psi_2+
        \frac{1}{2}
        (
            \gamma^2
            +c
        )
    )
    e^i\wedge v\lrcorner \Sigma_i,\nonumber\\
    e^i\wedge v\lrcorner F(A_{\sigma })_i
    =
    (
        \Psi_2+
        \frac{1}{2}
        (
            \sigma^2
            +c
        )
    )
    e^i\wedge v\lrcorner \Sigma_i.
\end{eqnarray}
According to Theorem \ref{thm:thre_for_KK}, we find
\begin{equation}
    \begin{aligned}
        e^i\wedge v\lrcorner \Sigma_i
=&e^i_a\wedge v^b \Sigma_{ibc}\\
=&c^{-1}v^b \epsilon_{ijk}K^j_be^i_a\wedge K^k_c.\\
    \end{aligned}
\end{equation}
Then with the identity \eqref{eq:proper_for_static}  given by the fact that $\im \Psi_2=0$, we find
\begin{equation}
    \begin{aligned}
        0=&v^b K^i_b\Sigma^i_{ac}e^{c}_k\\
=&v^b K^i_b\epsilon_{ijl}e_a^j\wedge e^l_c e^{c}_k\\
=&v^b K^i_b\epsilon_{ijl}(e_a^j e^l_c-e_c^j e^l_a) e^{c}_k\\
=&2v^b K^i_b\epsilon_{ijl}e_a^j\delta^l_k\\
=&2v^b K^i_b\epsilon_{ijk}e_a^j,
    \end{aligned}
\end{equation}
which indicates
\begin{equation}
    e^i\wedge v\lrcorner \Sigma_i
=cv^b \epsilon_{ijk}K^j_be^i_a\wedge K^k_c
=0.
\end{equation}
Therefore, we find $\int_{\tilde{\Delta}}\gamma\delta\lt( e^i\wedge v\lrcorner F^i(A_\gamma)\rt)=\int_{\tilde{\Delta}}\sigma\delta\lt( e^i\wedge v\lrcorner F^i(A_\sigma)\rt)=0$ in eq. \eqref{eq:_sym_mid}.
So we obtain
\begin{equation}
    \begin{aligned}
        &\int_{\tilde{\Delta}}\delta A_{\gamma i}
\wedge 
\delta_v A_\gamma^i
-\int_{\tilde{\Delta}}\delta A_{\sigma i}
\wedge 
\delta_v A_\sigma^i\\
=&\int_{\tilde{\Delta}}
\delta \Gamma^i
\wedge  
v\lrcorner 
(
    F^i(A_\gamma)
    -F^i(A_\sigma)
)
+\int_{\tilde{\Delta}}
\delta
(
    F_i(A_{\gamma })
    -F_i(A_{\sigma })
)
v\lrcorner \Gamma^i.
    \end{aligned}
\end{equation}
where  
\begin{equation}
    \begin{aligned}
        \underset{\Leftarrow}{F^i}(A_{\gamma })
-\underset{\Leftarrow}{F^i}(A_{\sigma })
=\frac{\gamma^2-\sigma^2}{2}\underset{\Leftarrow}{\Sigma^i}.
    \end{aligned}
\end{equation}
Then by lamma \ref{lamma:2-1form_wedge}, we find
\begin{equation}
    \begin{aligned}
        \int_{\tilde{\Delta}}\frac{2}{\gamma^2-\sigma^2}
(
    \delta A_{\gamma i}
    \wedge 
    \delta_v A_\gamma^i
    -\delta A_{\sigma i}
    \wedge 
    \delta_v A_\sigma^i
)
=&\int_{\tilde{\Delta}}
(
    \delta \Gamma^i
    \wedge  
    v\lrcorner 
    \Sigma^i
    +
    \delta
    \Sigma^i
    v\lrcorner \Gamma^i
)\\
=&\int_{\tilde{\Delta}}
\delta (\Gamma^i
    \wedge  
    v\lrcorner 
    \Sigma^i).
    \end{aligned}
\end{equation}
As the result, we find
\begin{equation}
    \begin{aligned}
        \mathfrak{d}\Theta_0\left(\delta, \delta_v\right)
                =&\int_{\tilde{\Delta}}2\delta e^i\wedge\delta_v e_i
                -\frac{2}{\gamma^2-\sigma^2}\int_{\tilde{\Delta}}
                \left(
                    \delta A_{\gamma i}\wedge\delta_v A^i_\gamma
                    -\delta A_{\sigma i}\wedge\delta_v A^i_\sigma
                \right)\\
                =&\int_{\tilde{\Delta}}\delta (\Gamma^i
                \wedge  
                v\lrcorner 
                \Sigma^i)
                -\delta (\Gamma^i
                \wedge  
                v\lrcorner 
                \Sigma^i)\\
                =&0.
    \end{aligned}
\end{equation}
Henceforth, we get 
\begin{equation}
            \mathfrak{d}\Theta_0\left(\delta, \delta_v\right)=0.
        \end{equation}
Finally, we have demonstrated 
\begin{equation}
\delta \Theta_0(\delta)=0,
\end{equation}
which indicates the symplectic form at boundary  can be expressed as 
\begin{equation}
            \kappa \beta \Omega_{\tilde{\Delta}}\left(\delta_1, \delta_2\right)=\frac{1}{\gamma^2-\sigma^2} \int_{\tilde{\Delta}} \delta_{[1} A_\gamma^i \wedge \delta_{2]} A_{\gamma i}-\frac{1}{\gamma^2-\sigma^2} \int_{\tilde{\Delta}} \delta_{[1} A_\sigma^i \wedge \delta_{2]} A_{\sigma i},
\end{equation}
\end{proof}
\section{The closure of the symplectic structure}\label{app:clo_J}
In this appendix, we demonstrate that the symplectic current $J$ is closed by imposing the equations of motion, i.e., the on-shell condition.
This is a universal result, which is independent of the specific theory under consideration.

Consider an $n$-dimensional spacetime $\left(\mathscr{M},g_{ab}\right)$. The action of the system is given by
\begin{equation}\label{eq:gen_act}
S=\int_{\mathscr{M}}\boldsymbol{L}(\phi^i, \partial_a\phi^i,....),
\end{equation}
where $\boldsymbol{L}$ is the Lagrangian density $n$-form, and $\phi^i$ collectively  denotes the dynamical fields of the system, including the metric field $g_{ab}$ and possible matter fields.
In general, the Lagrangian density $\boldsymbol{L}$ includes any higher-order derivate of $\phi^i$.

The variation of $\boldsymbol{L}$ in eq. \eqref{eq:gen_act}, combined with integration by part, yields
\begin{equation}\label{eq:del_L}
    \delta \boldsymbol{L}
    =\boldsymbol{E}_i\delta \phi^i+d\theta(\phi^i,\delta\phi^i).
\end{equation}
Imposing the on-shell condition $\boldsymbol{E}_i=0$, eq. \eqref{eq:del_L} reduces to
\begin{equation}\label{eq:del_L_onshell}
    \delta \boldsymbol{L}
    =d\theta(\phi^i,\delta\phi^i).
\end{equation}
The symplectic current is defined as 
\begin{equation}
    J(\phi;\delta_1\phi,\delta_2\phi)
    :=\delta_1\theta(\phi,\delta_2\phi)
    -\delta_2\theta(\phi,\delta_1\phi).
\end{equation}
Thus, eq. \eqref{eq:del_L_onshell} indicates
\begin{equation}
    dJ=\delta_1\delta_2\boldsymbol{L}
    -\delta_2\delta_1\boldsymbol{L}
    =0.
\end{equation}
\bibliographystyle{jhep} 
\bibliography{myref} 
\end{document}